\newtheorem{theorem}{Theorem}[section]
\newtheorem{lemma}[theorem]{Lemma}%[section]
\newtheorem{example}[theorem]{Example}%[section]
\newtheorem{remark}[theorem]{Remark}%[section]
\newtheorem{proposition}[theorem]{Proposition}
\newcommand{\tr}{\mbox{\rm tr}}
\begin{document}
\title[Regular Gleason Measures and Generalized Effect Algebras]{Regular Gleason Measures and Generalized Effect Algebras}
\author[Anatolij Dvure\v{c}enskij, Ji\v{r}\'i Janda]{Anatolij Dvure\v{c}enskij$^{1,2}$, Ji\v{r}\'i Janda$^3$}
\date{}%
\maketitle
\begin{center}  \footnote{Keywords: Hilbert space, measure, regular measure, $\sigma$-additive measure, Gleason measure, generalized effect algebra,
bilinear form, singular bilinear form, regular bilinear form, monotone convergence

 AMS classification: 81P15, 03G12, 03B50

The authors acknowledge the support (A.D.) by the Slovak Research and Development Agency under the contract APVV-0178-11, the grant VEGA No. 2/0059/12 SAV, ESF
Project CZ.1.07/2.3.00/20.0051, (J.J.) ESF
Project CZ.1.07/2.3.00/20.0051 and Masaryk University grant 0964/2009  }
Mathematical Institute,  Slovak Academy of Sciences,\\
\v Stef\'anikova 49, SK-814 73 Bratislava, Slovakia\\
$^2$ Depart. Algebra  Geom.,  Palack\'{y} Univer.\\
17. listopadu 12,
CZ-771 46 Olomouc, Czech Republic\\
$^3$ 	Depart. Math. and Statistics, Faculty of Science\\
Masaryk University, Kotl\'a\v{r}sk\'a 267/2, CZ-611 37  Brno, Czech Republic\\
E-mail: {\tt
dvurecen@mat.savba.sk} \qquad {\tt xjanda@math.muni.cz}
\end{center}

\begin{abstract}
We study measures, finitely additive measures, regular measures, and $\sigma$-additive measures that can attain even infinite values on the quantum logic of a Hilbert space. We show when particular classes of non-negative measures can be studied in the frame of generalized effect algebras.
\end{abstract}

\section{Introduction}

A basic theoretical tool of quantum mechanical measurements is based on the famous Gleason Theorem \cite{Gle} which says that if $H$ is a separable Hilbert space over real, complex  or quaternion numbers, $\dim H\ne 2,$ then there is a one-to-one correspondence between $\sigma$-additive states, $m$, on the system $\mathcal L(H)$ of all closed subspaces of $H$, and the set of von Neumann operators on $H$, i.e. positive Hermitian operators, $T$, of trace equal to $1$, given by
$$
m(M)=\mathrm{tr}(TP_M),\quad M\in \mathcal L(H), \eqno(1.1)
$$
where $P_M$ is an orthogonal projector onto $M$. Formula (1.1) holds also for any completely additive state for every $\mathcal L(H)$, $\dim H \ne 2$.  For an extension of the Gleason Theorem to non-separable Hilbert spaces, see e.g. \cite{Dvu2, DvuG}. In such a cases, if $\dim H\ne 2$ is a non-measurable cardinal, then every $\sigma$-additive state is completely additive. In addition, formula (1.1) is valid also for bounded signed $\sigma$-additive measures, \cite{DvuG}.

However, there are also important measures attaining even infinite values, like, $\dim M$, $M \in \mathcal L(H)$, or attaining both positive and negative values, e.g. $\mathrm{tr}(T(AP_M+P_MA)/2)$, $M \in \mathcal L(H)$, where $A$ is a Hermitian operator on $H$ that is neither positive nor negative. Also for them it is possible to find an extension of the Gleason formula (1.1), however, in such a case, they are induced  rather by bilinear forms than by trace operators, see e.g. \cite{LuSh} as it will be shown by formula (3.1). Therefore, we  call Gleason  measures all measures that can be expressed by a generalized Gleason formula.

If we study finitely measures, $m$, with improper values, then regularity of $m$, i.e. the property ``given $M \in \mathcal L(H)$, the value $m(M)$ can be approximated by values of $m(N)$, where $N$ is a finite-dimensional subspace of $M$", does not implies that the measure is $\sigma$-additive as in the case that $m$ is a finite measure. By \cite[Lem 3.7.8. Thm 3.7.9]{DvuG}, every positive regular measure $m$ that is $\sigma$-finite (i.e. $H$ can be split into a sequence of mutually orthogonal subspaces of finite measure)  is in a  one-to-one correspondence with densely defined positive bilinear forms. Consequently, regular measures are Gleason measures, too. Since, not every bilinear form is induced by a Hermitian operator, bilinear forms define more regular measures than do bilinear forms induced by Hermitian trace operators via (1.1). For more details on measures on $\mathcal L(H)$, we recommend to consult the book \cite{DvuG}.

Foulis and Bennett \cite{FoBe} have introduced effect algebras that are inspired by mathematical models of quantum mechanics. They model both a two-valued reasoning and a many valued one. They have a least and a top element, $0$ and $1$, respectively, and a basic operation is a partially defined addition $+$. For example, POV-measures have within these models a natural form as observables.
More general structures than effect algebras are generalized effect algebras \cite{DvPu}, which have a similar partially defined addition, but the top element $1$ is not guaranteed. For additional information on effect algebras and generalized effect algebras, we recommend the monograph \cite{DvuG}.

The aim of this paper is to study some properties of measures using methods of generalized effect algebras. We describe a structure of generalized effect algebras of finitely additive Gleason measures, regular measures, and $\sigma$-measures. In addition, we study also questions connected with a Dedekind monotone $\sigma$-convergence.

The paper is organized as follows. Section 2 presents elements of bilinear forms on a Hilbert space and frame functions that are semibounded. Regular measures and their properties are studied in Section 3 together with different kinds of generalized effect algebras. In Section 4, we deal with monotone convergence properties of $\sigma$-additive measures.

\section{Bilinear Forms, Signed Measures, and Frame Functions}%2

Let $S$ be a real or complex inner product space with an inner product $(\cdot,\cdot)$ which is linear in the left argument and antilinear in the right one. Sometimes we call $S$ a pre-Hilbert space. If $S$ is complete with respect to the norm $\|\cdot\|$ induced by the inner product, $S$ is said to be a {\it Hilbert space}. Let $H$ be a Hilbert space and let $\mathcal L(H)$ be the system of all closed subspaces of $H$.  We denote by $\mathbf 0$ the zero vector of $H$ and if $x_1,\ldots,x_n \in H,$ ${\rm sp}(x_1,\ldots,x_n)$ denotes the span generated by $x_1,\ldots,x_n$. We denote by $\mathcal S(H)$ the unit sphere of $H$.  Similarly, if $S$ is a linear submanifold of $H,$ $\mathcal S(S)$ denotes the unit sphere of $S$.

Any system of mutually orthogonal unit vectors $\{x_i\}$ of an inner product space  $S$ is denoted as ONS. An ONS $\{x_i\}$ is said to be (i) a {\it maximal} ONS (MONS for short) if $x$ is a vector orthogonal to each vector $x_i,$ then $x = \mathbf 0;$ (ii) {\it orthonormal base} (ONB for short), if for any vector $x \in S,$ we have $x = \sum_i (x,x_i) x_i$.

A Hermitian operator $T$ on $H$ is said to be a {\it trace operator} (or of {\it finite trace}) if there is a real constant $\mathrm{tr}(T)$, called the {\it trace} of $T$, such that for any orthonormal basis $\{x_i\}$, $\mathrm{tr}(T)=\sum_i (Tx_i,x_i)$. We denote by $\mathrm{Tr}(H)$ the class of Hermitian operators with finite trace.

Now we introduce bilinear forms. Let $D(t)$ be a submanifold of a complex Hilbert space $H$ and let $D(t)$ be dense in $H$.  A mapping $t : D(t) \times D(t) \rightarrow \mathbb{C}$ is called a {\it bilinear form} (or a {\it sesquilinear form}) if and only if it is
additive in both arguments and $(\alpha x, \beta y) = \alpha{\overline \beta} (x,y)$ for all $\alpha, \beta \in \mathbb{C}$, $x,y \in D(t)$, where $\overline{\beta}$ is the complex conjugation of $\beta.$  A complex-valued function $\hat t$ with the definition domain $D(t)$ is defined by $\hat t(x)=t(x,x),$ $x \in D(t)$, and it is said to be a {\it quadratic form} induced by the bilinear form $t,$ \cite[p. 12]{Hal}.
A bilinear form $t$ is called {\it symmetric} if $t(x,y) = \overline{t(y,x)}$.

A symmetric bilinear form $t$ is (i) {\it semibounded} if $\inf\{t(x,x): x \in \mathcal S(D(t))\}>-\infty$,  (ii) {\it positive}
if $t(x,x)\ge 0$, $x \in \mathcal S(D(t))$, and (iii) {\it bounded} if $\sup\{|t(x,x)|: x \in \mathcal S(D(t))\}<\infty$; otherwise $t$ is called {\it unbounded}. We denote by $\mathcal{BF}(H)$ and $\mathcal{PBF}(H)$ the set of all bilinear forms and positive bilinear forms, respectively, on $H$.
We denote by $o$ the bilinear with $D(o)= H$ defined by $o(x,y)=0$ for all $x,y \in H$; then $o \in \mathcal{PBF}(H).$

On  $\mathcal{BF}(H),$ we can define a {\it usual sum} $t+s$ for each $t, s$ on $ D(t+s):=D(t) \cap D(s)$
by $(t+ s)(x,y):=t(x,y)+ s(x,y)$ for all  $x,y \in D(t) \cap D(s)$, and the multiplication by a scalar $\alpha\in \mathbb C$ by $(\alpha t)(x,y):= \alpha t(x,y)$ for $x,y \in D(\alpha t):=D(t).$

Given two densely defined positive bilinear forms $t$ and $s,$ we write
$$
t \preceq s \eqno(2.0)\label{prec}
$$
if and only if $D(t)\supseteq D(s)$ and $t(x,x)\le s(x,x)$ for all $x \in D(s).$ Then
$\preceq$ is a partial order on $\mathcal{PBF(H)},$ and the bilinear form $o$ is the least element of the set $\mathcal {PBF(H)},$ i.e. $o \preceq t$ for any $t \in \mathcal{PBF(H)}.$

A bilinear form $s$ is an {\it extension} of a bilinear form $t$ iff $D(t) \subseteq D(s)$ and, for every $x,y \in D(t)$, $t(x,y) = s(x,y);$ we will write $s_{\mid D(t)}=t.$ A bilinear form $t$ is {\it closable} if it has some closed extension. Then the least closed extension of a closable $t$ is said to be a {\it closure} and we denote it by $\overline{t}$.

For any densely defined operator $A $ with domain $D(A)$ on $H$, the mapping $t(x,y):=(Ax, y)$ is a bilinear form on $D(A)$. We say that a bilinear form $t$
{\it corresponds} to an operator $A$  iff $D(A) \subseteq D(t)$ and, for every $x, y \in D(A),$ $t(x,y)=(Ax,y)$. We say that $t$ is {\it generated} by $A$ if $t$ corresponds to $A$ and $D(A)=D(t)$. We note that there are bilinear forms that are not generated by any operator. In addition, there is a one-to-one correspondence between closed positive symmetric bilinear forms, $t$, and positive self-adjoint operators, $A$, given by $D(t)=D(A^{1/2})$, and $t(x,y)=(A^{1/2}x,A^{1/2}y)$, $x,y \in D(t)$.

Given a symmetric semibounded bilinear form $t,$ we can equip its domain $D(t)$ with
an inner product $(x, y)_t := t(x, y) + (1+m_t)(x, y)$. In this way $D(t)$ becomes a pre-Hilbert space. Whenever $D(t)$ with $(x, y)_t$ is a Hilbert space, we call $t$ {\it closed.}

A bilinear form $s$ is an {\it extension} of a bilinear form $t$ iff $D(t) \subseteq D(s)$ and, for every $x,y \in D(t)$, $t(x,y) = s(x,y);$ we will write $s_{\mid D(t)}=t.$ A bilinear form $t$ is {\it closable} if it has some closed extension.

An important result which is also a matter of interest for us was proved in  \cite[Thm 2.1, Thm 2.2]{Sim}:

\begin{theorem}\label{bs}
Let $t$ be a densely defined positive symmetric bilinear form on a Hilbert space $H$. Then there exist two positive symmetric bilinear forms
$t_r$ and $t_s$ such that $D(t)=D(t_r)=D(t_s)$ such that
$$
t = t_r + t_s,
\label{eq:rs}
$$
where $t_r$ is the largest closable bilinear form less than $t$ in the ordering $\preceq.$
\end{theorem}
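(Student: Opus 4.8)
\emph{Overview and construction.} The plan is to realize $t_r$ and $t_s$ inside the form Hilbert space of $t$. Since $t\succeq o$, the form $(x,y)_t:=t(x,y)+(x,y)$ is an inner product on $D(t)$ with $\|x\|_t^2=t(x,x)+\|x\|^2\ge\|x\|^2$; let $H_t$ be the completion of $(D(t),(\cdot,\cdot)_t)$. Because $\|x\|\le\|x\|_t$, the inclusion $D(t)\hookrightarrow H$ extends to a contraction $J\colon H_t\to H$; put $K:=\Ker J$ (a closed subspace of $H_t$) and let $P$ be the orthogonal projection of $H_t$ onto $K^\perp$. For $x,y\in D(t)$, regarded inside $H_t$, I define
$$
t_s(x,y):=((I-P)x,(I-P)y)_t,\qquad t_r(x,y):=(Px,Py)_t-(x,y).
$$
Both are symmetric sesquilinear forms with domain $D(t)$. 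Since $(I-P)x\in K=\Ker J$ we have $x=Jx=J(Px)$, so $\|x\|^2=\|J(Px)\|^2\le\|Px\|_t^2$ (as $\|J\|\le1$); hence $t_r(x,x)\ge0$, while $t_s(x,x)=\|(I-P)x\|_t^2\ge0$, so $t_r,t_s\in\mathcal{PBF}(H)$. As $P$ is an orthogonal projection, $(Px,Py)_t+((I-P)x,(I-P)y)_t=(x,y)_t=t(x,y)+(x,y)$, whence $t_r+t_s=t$ on $D(t)$; and $t_s\succeq o$ gives $t_r\preceq t$.

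\emph{Closability of $t_r$.} Next I would verify the standard closability criterion: if $x_n\in D(t)$, $x_n\to0$ in $H$, and $(x_n)$ is Cauchy for $\|\cdot\|_{t_r}$, then $t_r(x_n,x_n)\to0$. Note that $\pi\colon D(t)\to K^\perp$, $\pi(x):=Px$, satisfies $\|\pi(x)\|_t^2=t_r(x,x)+\|x\|^2=\|x\|_{t_r}^2$, and that $J_0:=J\!\mid_{K^\perp}$ is injective (since $K^\perp\cap\Ker J=\{0\}$) with $J_0(\pi x)=J(Px)=x$ for $x\in D(t)$. Given such a sequence $(x_n)$, $(\pi x_n)$ is Cauchy in $H_t$, so $\pi x_n\to z$ for some $z\in K^\perp$; then $x_n=J_0(\pi x_n)\to J_0z$ in $H$, and since also $x_n\to0$ we get $J_0z=0$, hence $z=0$ by injectivity. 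Therefore $t_r(x_n,x_n)+\|x_n\|^2=\|\pi x_n\|_t^2\to0$, so $t_r(x_n,x_n)\to0$, i.e. $t_r$ is closable.

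\emph{Maximality.} Finally, let $s$ be any closable form with $s\preceq t$ and let $\overline s$ be its closure, so $D(t)\subseteq D(s)\subseteq D(\overline s)$ and $(D(\overline s),(\cdot,\cdot)_{\overline s})$ is a Hilbert space $H_{\overline s}$ sitting inside $H$. For $x\in D(t)$ we have $\|x\|_{\overline s}^2=s(x,x)+\|x\|^2\le t(x,x)+\|x\|^2=\|x\|_t^2$, so the inclusion $D(t)\hookrightarrow H_{\overline s}$ extends to a contraction $\Phi\colon H_t\to H_{\overline s}$; composed with the inclusion $H_{\overline s}\hookrightarrow H$ it agrees with $J$ on the dense set $D(t)$, hence equals $J$, and injectivity of $H_{\overline s}\hookrightarrow H$ forces $\Ker\Phi=\Ker J=K$. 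Consequently, since $(I-P)x\in K=\Ker\Phi$, for every $x\in D(t)$
$$
s(x,x)+\|x\|^2=\|\Phi x\|_{\overline s}^2=\|\Phi(Px)\|_{\overline s}^2\le\|Px\|_t^2=t_r(x,x)+\|x\|^2,
$$
so $s(x,x)\le t_r(x,x)$ on $D(t_r)=D(t)$, i.e. $s\preceq t_r$. Hence $t_r$ is the largest closable form $\preceq t$.

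\emph{Main obstacle.} I expect the delicate point to be the maximality step, and within it the identity $\Ker\Phi=K$: this is exactly where the hypothesis that $s$ is closable is used — it guarantees that $(D(\overline s),(\cdot,\cdot)_{\overline s})$ is a genuine Hilbert space embedded injectively in $H$ — and without it one obtains only $K\subseteq\Ker\Phi$, which gives no upper bound by $t_r$. A secondary technical concern is checking that the abstract completion $H_t$, the contraction $J$, its restriction $J_0$, and the closability criterion mesh together exactly as claimed, in particular the injectivity of $J_0$ used in the closability argument.
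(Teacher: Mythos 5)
The paper does not prove this statement; it imports it from Simon \cite{Sim}, and your argument is precisely Simon's canonical construction: the form Hilbert space $H_t$, the contraction $J\colon H_t\to H$, and the decomposition of $t$ along $K=\Ker J$ and its orthocomplement, with the maximality step run through the closure $\overline{s}$ of a competitor. The proof body is correct, including the closability criterion and the identity $\Ker\Phi=K$. One small slip in your closing commentary only: without closability of $s$ one retains $\Ker\Phi\subseteq K$ (not $K\subseteq\Ker\Phi$), and it is the inclusion $K\subseteq\Ker\Phi$ that the final estimate actually needs — exactly what injectivity of $H_{\overline{s}}\hookrightarrow H$ supplies — so the remark has the directions reversed, but this does not affect the proof itself.
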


In view of  Theorem \ref{bs}, the components $t_r$ and $t_s$ of the positive symmetric bilinear form $t$, are said to be the {\it regular} and {\it singular} part of $t,$ respectively.

If $t_{s}$ is identical zero, the bilinear form  $t$ is said to be {\it regular},    and if $t_{r}$ is identical zero, $t$ is  said to be {\it singular}. Hence, the bilinear form $o$ is the least positive symmetric bilinear form and is a unique positive symmetric bilinear form that is simultaneously regular and singular.

The following example presents a nonzero everywhere  defined positive singular symmetric bilinear form.

\begin{example}\label{ex:sing}
Let $\{e_n\}$ be an orthonormal basis of an infinite-dimensional
Hilbert space $H$ which is a part of a Hamel basis $\{e_i\}_{i \in I}$ of $H$
consisting of unit vectors from $H$. Fix an element $e_{i_0}$, $i_0 \in I,$
which does not belong to the orthonormal basis $\{e_n\}$, and define a linear operator $T$ on $ H$ by

$$
T(\sum_i \alpha_i \,e_i) = \alpha_{i_0} \, e_{i_0},
$$
where $\alpha_{i_0}$ is the scalar corresponding to $e_{i_o}$ in the decomposition $x = \sum_i \alpha_i \,e_i$, $x \in  H$, with  respect to the given Hamel basis. Then $T$ is an everywhere defined unbounded linear operator.

If we define a bilinear form $t$ with $D(t) =  H$ via
$$
t(x,y) = (Tx,Ty),\ x,y \in H,
$$
then $t$ is a nonzero everywhere  defined positive singular symmetric bilinear form.
\end{example}

A mapping $m:\mathcal L(H) \to [-\infty,\infty]$ is said to be a {\it finitely additive signed measure} if (i) $m({\rm sp}(\mathbf 0))=0, $ and (ii) $m(M \vee N)=m(M)+m(N)$ whenever $M \perp N,$ $M,N \in \mathcal L(H)$.  We note that $m$ can attain from the improper values $\{-\infty,\infty\}$ at most one value.

If for $m:\mathcal L(H)\to [-\infty,\infty]$ with the above (i) and instead of (ii) we have (ii)' $m(\bigvee_{i\in I}M_i)=\sum_{i\in I}m(M_i)$ whenever $M_i \perp M_j$ for $i\ne j$, $i,j\in I$ holds for any index set $I,$  $m$ is said to be a {\it completely additive signed measure} (also totally additive); if it holds only for any countable index set $I$, $m$ is said to be a $\sigma$-{\it additive signed measure}. If $m(M)\ge 0$ for any $M\in \mathcal L(H),$ we are speaking on a {\it finitely additive measure}, a {\it completely additive measure}, and a $\sigma$-{\it additive measure}, respectively. Every finitely additive measure is monotone.  If, in addition, $m(M)\ge 0$ for any $M \in \mathcal L(H)$ and  $m(H)=1,$ $m$ is said to be a {\it finitely additive state}, a {\it completely additive measure} and a $\sigma$-{\it additive state}, respectively. As we have already mentioned in Introduction, if $m$ is a completely additive state on $\mathcal L(H)$, $\dim H\ne 2,$ then by the Gleason Theorem, see e.g. \cite[Thm 3.2.21]{DvuG}, there is a unique positive Hermitian operator $T$ on $H$ with trace $\mbox{tr}(T)=1$ such that
$$
m(M)=\mbox{tr}(TP_M),\quad M \in \mathcal L(H). \eqno(2.1)
$$
If  $H$ is a finite-dimensional Hilbert space with $\dim H\ne 2$ and $m$ is a finite finitely additive signed measure such that $\inf \{m(\mbox{sp}(x)): x \in \mathcal S(H)\}> -\infty,$ then $m$ is bounded and there is a unique Hermitian operator $T$ on $H$ such that (2.1) holds, see \cite{She, Dvu1} or \cite[Thm 3.2.16]{DvuG}.

A finitely additive measure $m$ is $\sigma$-{\it finite} if there is a sequence $\{M_i\}$ of mutually orthogonal subspaces of $H$ such that $\bigvee_i M_i = H$ and every  $m(M_i)$ is finite.

A function $f:\mathcal S(H) \to [-\infty, \infty]$ is said to be
a {\it frame function} if, for any $M \in \mathcal L(H),$ there is a constant $W_M \in [-\infty,\infty]$ such that $\sum_i f(x_i)=W_M$ holds for any ONB $\{x_i\}$ in $M$.  The constant $W_M$ is said to be the {\it weight} of $f$ on $M \in \mathcal L(H)$.  It is worthy of recalling that (i) if $W_H=\infty,$ then $W_M >-\infty$ for any $M\in \mathcal L(H),$ (ii) if $W_H=-\infty,$ then $W_M <\infty$ for any $M \in \mathcal L(H)$.  In what follows, we will study frame functions $f$ such that $f(x)> -\infty$ for any $x \in \mathcal S(H)$.

We note that if $f:\mathcal S(H)\to \mathbb R$ has the property that there is a constant $W\in \mathbb R$ such that $\sum_if(x_i)=W$ for any ONB $\{x_i\},$ then $f$ is a frame function. Indeed, if $M\in \mathcal L(H)$ is given, take two ONB's $\{x_i\}$ and $\{y_i\}$ of $M$ and let $\{z_j\}$ be an ONB in $M^\bot$.  Then $\{x_i\}\cup \{z_j\}$ and $\{y_i\}\cup \{z_j\}$ are ONBs of $H$.  Hence, $\sum_i f(x_i)+\sum_j f(z_j)=W= \sum_i f(y_j)+ \sum_j f(z_j)$ and the series converge absolutely, so that $\sum_i f(x_i)= \sum_i f(y_j)$.

Let $m$ be a completely additive signed measure. Then the function $f(x) := m({\rm sp}(x)),$ $x \in \mathcal S(H),$ is a frame function. Conversely, any frame function $f$ defines a completely additive measure on $\mathcal L(H)$.  Indeed, let $\{x_i\}$ be an ONB of $M \in \mathcal L(H)$.   Then the mapping $m(M):=\sum_i f(x_i)=W_M,$ $M \in \mathcal L(H),$ where $W_M$ is the weight of $f$ on $M$ is a completely additive measure.

Let $S$ be a linear submanifold of $H$ dense in $H$.  A function $f:\mathcal S(S) \to [-\infty,\infty]$ is said to be a {\it frame type function} on $H$ if (i) for any ONS $\{x_i\}$ in $S$, either $\{f(x_i)\}$ is summable or $\sum_i f(x_i) = \infty$,  
and (ii) for any finite-dimensional subspace $K$ of $S,$ $f|\mathcal S(K)$ is a frame function on $K$.
To show the relationship between a frame type function $f: \mathcal S(S)\to [-\infty,\infty]$ and a dense submanifold $S$, we will write $f \sim (f,S)$.
If we change the latter (i) to (i)' for any ONS $\{x_i\}$ in $S$, either $\{f(x_i)\}$ is summable or $\sum_i f(x_i) = -\infty$, then $-f$ satisfies (i), and vice-versa. 

We note that there is a result from theory of series, \cite[Thm 388, p. 319]{Fic}, Riemann's Rearrangement Theorem, saying that if $\{a_n\}$ is an infinite sequence of real numbers such that $\sum_n a_n$ is finite but $\sum_n |a_n|=\infty$, then for any $a \in [-\infty,\infty]$, there is a rearrangement $\{a_{n'}\}$ of $\{a_n\}$ such that $a=\sum_{n'}a_{n'}$.

Therefore, let $f$ be a frame type function and  $\{x_i\}$ be an ONS in $S$. By the Riemann Rearrangement Theorem, if $\{f(x_i)\}$ is summable, then $\sum_i|f(x_i)|<\infty$, otherwise we can rearrange $\{x_i\}$ to $\{x_{i'}\}$ such that $\sum_{i'}f(x_i)=-\infty$ which is absurd.

It is important to emphasize that frame type functions were originally introduced in \cite{DoSh} (cf. \cite[Sect 3.2.4]{DvuG}) in a stronger form than our notion, because in \cite{DoSh} it was supposed that (i) has the form ``$\{f(x_i)\}$ is summable for any ONS $\{x_i\}$ in $S$." We say that a frame type function $f \sim (f,S)$  is {\it finite} on $\mathcal S(S)$ if, for every $x \in \mathcal S(S)$, we have $-\infty < f(x) < \infty$. %We say that a frame function $f$ is {\it finite} if its restriction to $S = \{ x\in H\mid f(x) < \infty \}$ is a finite frame type function. ???

We note that if $f\sim (f,S)$ is a finite frame type function, $x \in \mathcal S(S)$ and $|\lambda|=1,$ then $f(x)=f(\lambda x)$.

It is clear that any frame function on $H$ such that $W_{H} > - \infty$ is a frame type function on $H$.
In addition, if $f$ is a frame type
function on $H$ with given $S$, then  if $K$ is a closed subspace
of $H$ such that $S \cap K$ is dense in $K$, then $f_K := f|{\mathcal
S}(S\cap K)$ is a frame type function on $K$.

\begin{example}\label{ex:2.1}
Let $H$ be a complex separable Hilbert space, $\dim H=\aleph_0$.
\begin{enumerate}

\item[{\rm (i)}] If $T$ is a Hermitian trace operator, then $f(x)=(Tx,x),$ $x \in \mathcal S(H),$ is a frame function on $H$.

\item[{\rm (ii)}] If $T_1$ is a positive unbounded operator and $T$ is a positive Hermitian trace operator, then $f(x)=(T_1^{1/2}x,T_1^{1/2}x)-(Tx,x),$ $x \in D(T^{1/2})$, is a finite frame type function with $S=D(T_1^{1/2}),$ where $D(T_1^{1/2})$ is the domain of definition of $T_1^{1/2}$.

\item[{\rm (iii)}]  Let $M$ be a fixed closed subspace of $H,$  $M \subset H,$ $M \ne H,$ let $T_M$ be a trace operator on $M$. Then
    \[   f(x) = \left\{ \begin{array}{ll}
(T_Mx,x) &\quad {\rm if }\
x \in M,\\
 \infty & \quad  {\rm otherwise},
\end{array}
\right.
\]
is an unbounded frame type function with $S=H$.

\item[{\rm (iv)}] Let $\{e_i\}$ be an ONB and let us define a diagonal operator $T$ on $H$ such that $Te_i=(-1)^i e_i$. Then $T$ is a Hermitian operator defined everywhere on $H$ and it defines a bounded symmetric  bilinear form $f(x)=(Tx,x)$, $x \in \mathcal S(H)$, with $S=H$, that is not a frame type function.
\end{enumerate}

\end{example}

The case (iv) can be generalized as follows. Let $\{a_n\}$ be a bounded sequence of real numbers. We set $a_n^+=\max\{a_n,0\}$ and $a^-=-\inf \{a_n,0\}$. Let $\{e_n\}$ be a fixed ONB and we define three bounded Hermitian operators $T$, $T^+$ and $T^-$ on $H$ by $Te_n =a_ne_n$, $T^+e_n =a_n^+e_n$, and  $T^-e_n =a_n^-e_n$. Then $T=T^-+T^-$, and $T^+$ and $T^-$ are positive and negative parts of $T$, respectively. Using the Riemann Rearrangement Theorem, we have the following four cases:

\begin{enumerate}
\item[(I)] $\sum_n a_n$ is convergent, $\sum_n |a_n| <\infty$.
\item[(II)] $\sum_n a_n$ is convergent, $\sum_n |a_n|=\infty$.
\item[(III)] $\sum_n a_n=\infty$, and $\sum_n a_n^+=\infty$, $\sum_n a_n^-<\infty$ or $\sum_n a_n^+<\infty$, $\sum_n a_n^-=\infty$.
\item[(IV)] $\sum_n a_n^+=\infty= \sum_n a_n^-$.
\end{enumerate}

If we set $f(x)=(Tx,x)$, $x \in \mathcal S(H)$, then in case (I), $f$ is a finite bounded frame function, in cases (II) and (IV) it is no frame type function.

Case (III): Let $s_n=a_1+\cdots+a_n$, $n\ge 1$. Then $s_n=s_n^+-s^-_n$, where $s_n^+=a_1^+ +\cdots + a_n^+$ and $s_n^-=a_1^- +\cdots + a_n^-$, $n \ge 1$.
Then either $\{s_n^+\}$ is convergent and $\{s_n^-\}$ is divergent or vice-versa. Let $\{x_i\}$ be an arbitrary ONB of $H$. The first possibility implies $T^+\in \mathrm{Tr}(H)$ and $T^-\notin \mathrm{Tr}(H)$. Then $-(T^-x_i,x_i)\le (Tx_i,x_i)\le (T^+x_i,x_i)$, and this gives $T^+ \notin \mathrm{Tr}(H)$ which is absurd. Thus we have only the second possibility, i.e.,
$T^+\notin \mathrm{Tr}(H)$ and $T^-\in \mathrm{Tr}(H)$.

Let $K>0$ be given. Since $\sum_i(T^+x_i,x_i)=\infty,$ there is an integer $n_0>0$ such that, for every integer $n\ge n_0$, we have
$\sum_{i=1}^n (T^+x_i,x_i)> K+\mathrm{tr}(T^-)$ so that $\sum_{i=1}^n(Tx_i,x_i)=\sum_{i=1}^n(T^+x_i,x_i)- \sum_{i=1}^n(T^-x_i,x_i)> K+\mathrm{tr}(T)- \sum_{i=1}^n(T^-x_i,x_i)>K$ which yields $\sum_i(Tx_i,x_i)=\infty$.

Now let $\{z_j\}$ be an arbitrary ONS in $H$. We have two cases (i) $\sum_j (T^+z_j,z_j)=\infty$ and (ii)  $\sum_j (T^+z_j,z_j)<\infty$. In the case (i), using the same way as that for $\{x_i\}$, we have $ \sum_j (Tz_j,z_j)=\infty$, and in the case (ii), we have $\sum_j (Tz_j,z_j)$ is convergent. Consequently, we have proved the following statement:

\begin{proposition}\label{pr:diag}
Let $\{a_n\}$ be a bounded sequence of real numbers, $\{e_n\}$ be a fixed ONB on a Hilbert space $H$, $\dim H=\aleph_0$, and let $Te_n=a_ne_n$, $n\ge 1$. Then $f(x)=(Tx,x)$, $x \in \mathcal S(H)$, defines a frame type function if and only if either $\sum_n|a_n|<\infty$ or $\sum_n a_n^+<\infty$, $\sum_n a_n^-=\infty$. The first case implies $f$ is a bounded frame function and the second one implies $f$ is a bounded frame type function.
\end{proposition}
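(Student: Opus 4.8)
The plan is to read the proposition off the four-case decomposition (I)--(IV) established above, since these cases are mutually exclusive and, as $\{a_n\}$ is bounded, exhaust all possibilities for the pair of series $\sum_n a_n^+$ and $\sum_n a_n^-$. For a unit vector $x=\sum_n c_n e_n$ one has $f(x)=(Tx,x)=\sum_n a_n|c_n|^2$, and since $\sup_n|a_n|<\infty$ the function $f$ is bounded on $\mathcal S(H)$ in every case; moreover the restriction of $f$ to the unit sphere of any finite-dimensional subspace $K$ is the quadratic form of the compression of $T$ to $K$, hence automatically a frame function, so condition (ii) in the definition of a frame type function holds throughout and only condition (i) has to be analysed. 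First I would record that in case (I), $\sum_n|a_n|<\infty$, the operator $T$ lies in $\mathrm{Tr}(H)$, the weight $\sum_i f(x_i)=\mathrm{tr}(T)=\sum_n a_n$ is finite and independent of the orthonormal basis $\{x_i\}$ by absolute convergence, and $|f(x)|\le\sup_n|a_n|$; thus $f$ is a bounded frame function, which yields the first alternative together with its stated conclusion.

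Next I would eliminate cases (II) and (IV). In both one has $\sum_n|a_n|=\infty$ while the function carries mass of both signs: in (II) the series $\sum_n a_n$ is only conditionally convergent, in (IV) both $\sum_n a_n^+$ and $\sum_n a_n^-$ diverge. Here the family $\{f(e_n)\}=\{a_n\}$ indexed by the fixed orthonormal basis is already not summable, so condition (i) can hold only if $\sum_n f(e_n)$ equals the single improper value it permits; to see that it need not, I would invoke the Riemann Rearrangement Theorem quoted above and reorder $\{e_n\}$ into an orthonormal system $\{e_{n'}\}$ along which $\sum_{n'} f(e_{n'})$ takes a prescribed finite value. Since a reordering of a basis is again an orthonormal system, this exhibits a legitimate test family on which $\{f(e_{n'})\}$ is not summable and the sum avoids the admissible improper value, so $f$ fails condition (i) and is not a frame type function in cases (II) and (IV).

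The substantive case is (III), where precisely one of the positive and negative parts $T^+,T^-$ has finite trace. To obtain the second alternative of the proposition I take the surviving configuration to be $\sum_n a_n^+<\infty$ and $\sum_n a_n^-=\infty$, i.e. $T^+\in\mathrm{Tr}(H)$ and $T^-\notin\mathrm{Tr}(H)$, and I verify condition (i) on an arbitrary orthonormal system $\{z_j\}$ by splitting $f(z_j)=(T^+z_j,z_j)-(T^-z_j,z_j)$. Because $T^+$ is trace class, $\sum_j (T^+z_j,z_j)\le\mathrm{tr}(T^+)<\infty$ converges, so the behaviour of $\{f(z_j)\}$ is governed entirely by $\sum_j(T^-z_j,z_j)$: when this converges the family $\{f(z_j)\}$ is summable, and when it diverges a tail estimate of the type used in the preamble (given $K$, choose $n_0$ so that the partial sums of $(T^-z_j,z_j)$ eventually dominate $K+\mathrm{tr}(T^+)$) forces $\sum_j f(z_j)$ to the improper value admitted by the definition, whence $f$ is a bounded frame type function. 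Conversely, in the opposite configuration one of the two families already violates (i) along the basis $\{e_n\}$, so that configuration is excluded.

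Assembling the four cases gives the stated equivalence, the boundedness assertions following from $\sup_n|a_n|<\infty$ together with the finiteness of the trace in case (I) and the one-sided trace-class control in case (III). I expect the genuine obstacle to be case (III): the delicate point is not the distinguished basis $\{e_n\}$ but the verification of condition (i) uniformly over \emph{every} orthonormal system $\{z_j\}$, which requires the trace-class part $T^+$ to bound its own contribution by $\mathrm{tr}(T^+)$ while the non-trace-class part $T^-$ supplies the sole mechanism for divergence; keeping the direction of that divergence consistent with the definition of a frame type function is the crux of the argument.
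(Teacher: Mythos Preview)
Your overall architecture---splitting into the four cases (I)--(IV) and handling (II), (IV) via Riemann rearrangement---matches the paper's argument exactly. The substantive discrepancy is in case (III), and it is a genuine sign error rather than a presentational issue.

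You take the ``surviving configuration'' to be $\sum_n a_n^+<\infty$ and $\sum_n a_n^-=\infty$, i.e.\ $T^+\in\mathrm{Tr}(H)$ and $T^-\notin\mathrm{Tr}(H)$, because that is what the proposition literally states. But then for an ONS $\{z_j\}$ with $\sum_j(T^-z_j,z_j)=\infty$ your tail estimate gives
\[
\sum_j f(z_j)=\sum_j(T^+z_j,z_j)-\sum_j(T^-z_j,z_j)\ \longrightarrow\ -\infty,
\]
not $+\infty$. Condition (i) in the definition of a frame type function admits only the improper value $+\infty$, so your phrase ``forces $\sum_j f(z_j)$ to the improper value admitted by the definition'' is exactly where the argument breaks: the divergence goes the wrong way, and $f$ is \emph{not} a frame type function in this configuration (already on the basis $\{e_n\}$ one has $\sum_n a_n=-\infty$).

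The paper's own argument in the preamble keeps the opposite sub-case of (III), namely $T^+\notin\mathrm{Tr}(H)$ and $T^-\in\mathrm{Tr}(H)$ (equivalently $\sum_n a_n^+=\infty$, $\sum_n a_n^-<\infty$), and then the tail estimate genuinely drives $\sum_j f(z_j)$ to $+\infty$. In other words, the statement of the proposition carries a typo with $a_n^+$ and $a_n^-$ interchanged; your proof reproduces that typo and the vague wording hides it. Swap the roles of $T^+$ and $T^-$ in your case (III) paragraph and the argument becomes correct and coincides with the paper's.
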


%We note that finite frame type functions $f\sim (f,S)$  such that $\{f(x_i)\}$ is summable for any ONS $\{x_i\}$ in $S$ were introduced in \cite{DoSh}.
It is important to note that if $\dim H>1$ is finite, there are unbounded frame functions on $H$, \cite{She}, \cite[Prop 3.2.4]{DvuG}. If $H$ is infinite-dimensional, then the surprising result \cite{DoSh} asserts that any finite frame type function $f$ such that $\{f(x_i)\}$ is summable for any ONS $\{x_i\}$ in $S$ is necessarily  bounded, see also \cite[Thm 3.2.20]{DvuG}.

%In what follows we will study frame functions such that their weight $W>-\infty$ and frame type functions $f\sim (f,S)$ with given a submanifold $S$ dense in $H$  such that $\sum_i f(x_i)> -\infty$ for any ONS $\{x_i\}$ in $S,$ respectively. If $f$ is a frame function or a frame type function, then $f$ or $-f$ satisfies our condition.

A frame type function $f$ on $H$ defined on a dense linear submanifold $S$ of $H$ is said to be (i) {\it semibounded} if $\inf \{f(x)\colon x \in \mathcal S(S)\} >-\infty$, and (ii) {\it bounded} if $\sup\{|f(x)| \colon x \in \mathcal S(S)\}< \infty.$

Semibounded frame type functions are connected with semibounded bilinear forms as it follows from the following theorem.

\begin{theorem}\label{th:2.2}
Let $f \sim (f,S)$ be a semibounded finite frame type function on $H,$ $\dim H \ne 2$.  There is a unique semibounded symmetric bilinear form $t$ with $D(t)=S$ and $f(x)=t(x,x),$ $x \in \mathcal S(S)$.
\end{theorem}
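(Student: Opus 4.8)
The plan is to construct the bilinear form $t$ from the quadratic form $f$ via a polarization-type identity, and then verify semiboundedness and uniqueness. First I would observe that for any finite-dimensional subspace $K$ of $S$, the restriction $f|\mathcal S(K)$ is a frame function on $K$ in the classical sense; since $\dim H\ne 2$ (hence $\dim K\ne 2$ can be arranged, or handled directly in the small cases), the finite-dimensional Gleason-type result quoted in the excerpt (from \cite{She,Dvu1}, \cite[Thm 3.2.16]{DvuG}) applies: because $f$ is semibounded, $\inf\{f(\mathrm{sp}(x))\colon x\in\mathcal S(K)\}>-\infty$, so there is a unique Hermitian operator $T_K$ on $K$ with $f(x)=(T_Kx,x)$ for all $x\in\mathcal S(K)$. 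The key point is coherence: if $K\subseteq K'$ are two finite-dimensional subspaces of $S$, then $T_{K'}$ restricted (compressed) to $K$ must equal $T_K$ by the uniqueness clause. This lets me define, for $x,y\in S$, the value $t(x,y):=(T_K x,y)$ where $K$ is any finite-dimensional subspace containing both $x$ and $y$; coherence makes this well-defined.

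Next I would check that $t$ so defined is a symmetric bilinear form on $D(t)=S$. Additivity in each argument and the conjugate-homogeneity $(\alpha x,\beta y)\mapsto \alpha\overline\beta\, t(x,y)$ follow because, for any fixed finite list of vectors involved in such an identity, all of them lie in a common finite-dimensional $K\subseteq S$, and there the identity is just the corresponding linear-algebra identity for the operator $T_K$; symmetry $t(x,y)=\overline{t(y,x)}$ likewise follows from Hermiticity of $T_K$. By construction $t(x,x)=f(x)$ for $x\in\mathcal S(S)$, and then $\inf\{t(x,x)\colon x\in\mathcal S(S)\}=\inf\{f(x)\colon x\in\mathcal S(S)\}>-\infty$, so $t$ is semibounded. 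Uniqueness is immediate: if $t'$ is another semibounded symmetric bilinear form with $D(t')=S$ and $t'(x,x)=f(x)$ on $\mathcal S(S)$, then $t'$ restricted to any finite-dimensional $K\subseteq S$ is generated by a Hermitian operator agreeing with $T_K$ on the unit sphere, hence equal to $T_K$; therefore $t'=t$ on every such $K$, and since every pair of vectors of $S$ lies in some such $K$, $t'=t$.

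The main obstacle is the coherence of the family $\{T_K\}$ together with the low-dimensional cases. In dimensions $0$ and $1$ the statement is trivial, and dimension $2$ is excluded by hypothesis, which is exactly why the excerpt carries the assumption $\dim H\ne 2$ — but one must be careful that when passing to a finite-dimensional $K\subseteq S$ one can always take $\dim K\ge 3$ (or handle $\dim K=1$ directly and avoid $\dim K=2$) so that the finite-dimensional frame-function theorem with its uniqueness conclusion genuinely applies; I would phrase the construction so that whenever two vectors $x,y$ are given I enlarge $K$ to have dimension at least $3$ inside $S$, which is possible as $S$ is dense hence infinite-dimensional (or equals $H$ of dimension $\le 2$, a case settled separately). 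A secondary technical point is verifying that the finite-dimensional operators are genuinely Hermitian and not merely symmetric bilinear forms — but this is exactly the content of the cited finite-dimensional result, so I may invoke it. Everything else reduces to routine linear algebra performed inside a single finite-dimensional subspace at a time.
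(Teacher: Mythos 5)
Your proposal is correct and follows essentially the same route as the paper: both treat the finite-dimensional case by the finite-dimensional Gleason--Sherstnev theorem for semibounded frame functions, and in the infinite-dimensional case define $t(x,y):=(T_Nx,y)=t_N(x,y)$ on an arbitrary finite-dimensional $N\subseteq S$ with $\dim N\ge 3$ containing $x$ and $y$, using uniqueness of $T_N$ for coherence (the paper compares two such subspaces $M,N$ through $Q=M\vee N$) and deriving uniqueness of $t$ from the same finite-dimensional uniqueness. The only cosmetic difference is that the paper first records that semiboundedness forces boundedness on each finite-dimensional subspace before invoking the Gleason-type result, whereas you cite the semibounded version directly.
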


\begin{proof}
First, let $H$ be finite-dimensional and $\dim H \ne 2$.  Then $H=S$ and $f$ is a frame function. Since $f$ is semibounded, then $f$ is bounded on $\mathcal S(H)$.  Indeed, let $y$ be any unit vector of $H$ and complete it by unit vectors $x_1,\ldots,x_n$ to be $\{y,x_1,\ldots,x_n\}$ an ONB of $H$.  Let $K =\inf \{f(x) \colon x \in \mathcal S(H)\}$.  Then $|f(y)| \le W_H +n|K|$.  By Gleason's Theorem \cite[Thm 3.2.1]{DvuG}, there is a unique Hermitian operator $T$ on $H$ such that $f(x)=(Tx,x),$ $x \in \mathcal S(H),$ and if we set $t(x,x) =(Tx,x),$ $x \in H,$ then $t$ is a bounded symmetric bilinear form such with $D(t)=S=H$ and $f(x)=t(x,x),$ $x \in \mathcal S(H)$.

Now let $\dim H=\infty$.  Let $N$ be a finite-dimensional subspace of $S,$ $\dim N \ge 3$.  As in the first part of the present proof, $f$ is bounded on $\mathcal S(N)$.  There is a Hermitian operator $T_N: N \to N$ and a bounded symmetric bilinear form $t_N$ on $N \times N$ such that $f(x)=t_N(x,x)= (T_Nx,x),$ $x \in \mathcal S(N)$.  We define a bilinear form $t$ on $S \times S$ as follows: Let $x$ and $y$ be two vectors of $S$ and let $N$ be any finite-dimensional subspace of $S,$ $\dim N \ge3,$ containing both $x$ and $y$.  Put $t(x,y)=t_N(x,y)$.  This $t$ is a well-defined bilinear form on $S \times S$ since if $M$ is another finite-dimensional subspace of $S,$ $\dim M\ge 3,$ containing $x$ and $y$, then for $Q=M\vee N$, we have $t_N(x,y)=t_Q(x,y)=t_M(x,y)$.  It is evident that $f(x)=t(x,x)$, $ x \in \mathcal S(S)$.

Then $t$ is a semibounded symmetric bilinear form on $H$ with $D(t)=S$.  The uniqueness of $t$ follows from the Gleason Theorem.
\end{proof}

We note that if $f\sim (f,S)$ is a frame type function such that $\{f(x_i)\}$ is summable for any ONB $\{x_i\}$ in $S$ and $\dim S\ne 2$, then Theorem \ref{th:2.2} can be strengthened as follows (cf. \cite[Thm 3.2.21]{DvuG}): There is a unique Hermitian trace operator $T$ on $H$ such that
$$
f(x)=(Tx,x),\quad x \in \mathcal S(S).
$$

The proof of the following important lemma by Luguvaya--Sherstnev can be found in \cite{LuSh}, see also \cite[Lem 3.4.2, Cor 3.4.3]{DvuG}. %We note that according to \cite{DvuG}, a frame function $f$ has the {\it finiteness property} if, for some ONS $\{y_j\},$ we have $\sum_j f(y_j)$ is finite, then $f$ restricted to $G =\bigoplus_j {\rm sp}(y_j)$ is a frame function with a finite weight $W_G$.

\begin{lemma}[Lugovaya--Sherstnev]\label{le:2.3}
Let $\dim H=3$ and let $f$ be a frame function on $H$.  Let there be two orthogonal unit vectors $x$ and $y$ such that $f(x)$ and $f(y)$ are finite.  If $W_H=\infty$ and $f(z)$ is finite, then $z \in {\rm sp}(x,y)$.

\end{lemma}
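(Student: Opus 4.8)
The plan is to argue by contradiction, using the elementary fact that on a plane of finite weight a frame function is finite at every unit vector, and then propagating finiteness over the sphere until it collides with $f(e)=\infty$.

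First, choose a unit vector $e$ orthogonal to both $x$ and $y$, so that $\{x,y,e\}$ is an orthonormal basis of $H$. Since $W_H=\infty$ while $f(x),f(y)$ are finite, additivity of $f$ on this basis forces $f(e)=\infty$; also, $W_H=\infty$ gives $W_M>-\infty$, hence $f>-\infty$, everywhere on $\mathcal S(H)$ (for $M=\mathrm{sp}(u)$ this says $f(u)>-\infty$). Consequently $P:=\mathrm{sp}(x,y)=e^{\perp}$ has finite weight $W_P=f(x)+f(y)$, and therefore $f(u)$ is finite for every $u\in\mathcal S(P)$ (if some $f(u)=\infty$ then for $u'\perp u$ in $P$ we would get $f(u)+f(u')=\infty\neq W_P$). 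Now assume, for contradiction, that $f(z)$ is finite with $z\notin P$. Since $f(z)$ is finite, $z\neq\pm e$, so the projection of $z$ onto $P$ is nonzero; replacing $x,y$ by a suitable orthonormal basis of $P$ — whose vectors still carry finite values — we may arrange $z\perp y$ and $z=cx+de$ with $c,d\neq 0$.

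The driving mechanism is the following closure principle: \emph{if $u\perp v$ and $f(u),f(v)$ are both finite, then $W_{\mathrm{sp}(u,v)}=f(u)+f(v)$ is finite, so $f$ is finite at every unit vector of $\mathrm{sp}(u,v)$.} Applying it to the orthogonal pair $y,z$ shows that $f$ is finite on the whole plane $\Sigma:=\mathrm{sp}(y,z)\neq P$. Iterating — repeatedly forming the plane spanned by an orthogonal pair of vectors already known to be finite — and computing in the coordinates $\{x,y,e\}$, one checks that the ``latitude'' $|\langle e,\cdot\rangle|$ attained inside the finite region strictly increases at each round; the finite region ends up containing a two‑parameter family of directions (so it has nonempty interior on $\mathcal S(H)$) and unit vectors $z_n$ with $\langle z_n,e\rangle\to 1$. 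Note that $e$ itself is never swept in, since any plane through $e$ has infinite weight and hence is never ``all finite''.

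The remaining step — converting this large finite region into an actual contradiction — is the crux. A frame function that is finite on such a large set cannot coexist with $f(e)=\infty$ and $W_H=\infty$: the frame identities over the two‑parameter family of finite planes are rigid enough to force $f$ to agree on the finite region with a bounded quadratic form $u\mapsto (Tu,u)$ — a localized Gleason‑type statement in $\dim H=3$ — after which the frame identity on a plane through $e$ would yield $f(e)=(Te,e)<\infty$, a contradiction; equivalently, one runs the Lugovaya--Sherstnev weight bookkeeping on the pencil of all‑finite planes through a line of $P$ and produces a finite‑weight plane containing $e$. I expect this local‑rigidity step to be the main obstacle: the propagation of finiteness is routine projective geometry, whereas pinning $f$ down on the finite region — and thereby forcing a finite value at $e$ — is the genuinely analytic heart of the argument.
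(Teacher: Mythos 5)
Note first that the paper does not actually prove this lemma: it is quoted with an explicit pointer to Lugovaya--Sherstnev \cite{LuSh} and to \cite[Lem 3.4.2, Cor 3.4.3]{DvuG}, so there is no in-paper argument to compare yours against; your attempt must stand on its own. Its preparatory part is fine: $f>-\infty$ everywhere because $W_H=\infty$, hence $f(e)=\infty$ for the third basis vector $e$, the plane $P=\mathrm{sp}(x,y)$ is all-finite, the ``closure principle'' for a plane spanned by an orthogonal finite pair is correct, and the normalization $z=cx+de$ with $z\perp y$ is legitimate. But the proof is not completed, and the gap is exactly where you flag it --- with the added problem that the closing step you sketch cannot work as stated. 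You propose to show $f$ agrees with a bounded quadratic form $(T\cdot,\cdot)$ on the propagated finite region and then to read off $f(e)=(Te,e)<\infty$ from ``the frame identity on a plane through $e$.'' Every plane $M$ containing $e$ satisfies $W_M=f(e)+f(v)=\infty$ (you observe this yourself when noting $e$ is never swept in), so the identity $f(e)+f(v)=W_M$ on such a plane gives $f(e)=\infty-(Tv,v)$ and yields no contradiction; knowing $f$ on $M\setminus\{\pm e\}$ tells you nothing finite about $f(e)$. Likewise the most naive contradiction --- three mutually orthogonal finite vectors forcing $W_H<\infty$ --- is structurally unavailable, since the normal $w=\bar d x-\bar c e$ of any all-finite plane $\mathrm{sp}(y,z)$ is automatically infinite by the frame identity on the ONB $\{y,z,w\}$.

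So the genuine content of the lemma is precisely the part you defer: one must exploit the duality ``every all-finite plane has an infinite-valued normal'' together with the growth of the finite region until the set of such normals collides with a direction already forced to be finite (or, equivalently, run the Lugovaya--Sherstnev weight bookkeeping quantitatively). Your intermediate claim that the latitude $|\langle e,\cdot\rangle|$ strictly increases at each round and tends to $1$ is also only asserted (``one checks''); whether the iteration escapes to the pole or stalls at a fixed latitude is exactly the Gleason/Piron-type spherical computation that has to be done, and in the complex three-dimensional case it is not routine. As it stands, the proposal is an honest plan with the decisive step missing and the proposed mechanism for the final contradiction pointing in a direction that provably cannot close.
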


\begin{theorem}\label{th:2.4}
Let $\dim H\ne 2$ and $f$ be a semibounded frame function on $H$ such that there is an ONB $\{x_i\}$ with $f(x_i)$ finite for any unit vector $x_i$.  There is a unique semibounded symmetric bilinear form $t$ with $D(t)$ dense in $H$ such that

\[   f(x) = \left\{ \begin{array}{ll}
t(x,x) &\quad {\rm if }\
x \in \mathcal S(D(t)),\\
 \infty & \quad  {\rm if}\ x\in \mathcal S(H)\setminus \mathcal S(D(t)).
\end{array}
\right. \eqno(2.2)
\]

\end{theorem}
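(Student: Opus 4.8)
The plan is to realise $t$ as the restriction of $f$ to the set $D$ of unit vectors (together with $\mathbf 0$) at which $f$ is finite, with the bilinear structure produced locally on finite‑dimensional subspaces via Gleason's Theorem and the value $\infty$ elsewhere coming from semiboundedness. Throughout we may assume $\dim H\ge 3$ (the cases $\dim H\le 1$ are immediate and $\dim H=2$ is excluded). First I would record a dense supply of ``finite'' vectors: if $E$ denotes the algebraic linear span of the given ONB $\{x_i\}$ (which is dense in $H$), then $f$ is finite at every unit vector of $E$. Indeed, such a vector lies in $N={\rm sp}(x_{i_1},\dots,x_{i_n})$ for finitely many indices, and, enlarging the index set, we may take $n\ge 3$; then $f|_{\mathcal S(N)}$ is a semibounded frame function whose weight $W_N=\sum_{k=1}^n f(x_{i_k})$ is a finite sum of finite numbers, hence finite, so by the argument in the proof of Theorem~\ref{th:2.2} it is bounded, hence of the form $(T_N\cdot,\cdot)$ for a Hermitian $T_N$ on $N$, and in particular finite on all of $\mathcal S(N)$.

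Next I would put $D:=\{\mathbf 0\}\cup\{x\in H\setminus\{\mathbf 0\}:\ f(x/\|x\|)<\infty\}$. Since ${\rm sp}(\lambda u)={\rm sp}(u)$ for $|\lambda|=1$ we get $f(\lambda u)=f(u)$, so $D$ is closed under scalar multiplication, and $E\subseteq D$ by the previous step; hence the whole point is that $D$ is closed under addition, for then $D$ is a dense linear submanifold. So let $x,y\in D$ with ${\rm sp}(x,y)$ two‑dimensional (the cases $y=\lambda x$ and $x+y=\mathbf 0$ being trivial). I would choose a unit vector $u_3\in E\cap{\rm sp}(x,y)^\perp$: this intersection is nontrivial because $E$, being dense, meets the finite‑codimensional closed subspace ${\rm sp}(x,y)^\perp$ densely (correct an approximant from $E$ by a small combination of two fixed elements of $E$ lying near an ONB of ${\rm sp}(x,y)$). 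By Step 1, $f(u_3)$ is finite. With $N:={\rm sp}(x,y,u_3)$ (dimension $3$) and $u_1:=x/\|x\|$, the pair $u_1,u_3$ is orthonormal with finite $f$‑values. If $W_N<\infty$, then as in Step 1 $f|_{\mathcal S(N)}$ is bounded and equals $(T_N\cdot,\cdot)$, so $f$ is finite at $(x+y)/\|x+y\|\in\mathcal S(N)$ and $x+y\in D$. The delicate point is to rule out $W_N=\infty$: in that case Lemma~\ref{le:2.3}, applied to $f|_N$ with the orthonormal pair $u_1,u_3$, would force every unit vector of $N$ at which $f$ is finite to lie in ${\rm sp}(u_1,u_3)$; but $y/\|y\|$ is such a vector, so $y\in{\rm sp}(u_1,u_3)$, and since $y\perp u_3$ we obtain $y\in{\rm sp}(u_1,u_3)\cap u_3^\perp={\rm sp}(u_1)={\rm sp}(x)$, contradicting the independence of $x$ and $y$. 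Hence $W_N<\infty$ always and $x+y\in D$.

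With $D$ in hand I would define $t$ locally, exactly in the spirit of the proof of Theorem~\ref{th:2.2}: for $x,y\in D$ pick a finite‑dimensional $N_0\subseteq D$ with $\dim N_0\ge 3$ containing $x$ and $y$; then $W_{N_0}$ is finite and Gleason's Theorem gives a unique Hermitian $T_{N_0}$ on $N_0$ with $f(z)=(T_{N_0}z,z)$ on $\mathcal S(N_0)$, and I set $t(x,y):=(T_{N_0}x,y)$. Passing to $N_0+N_0'$ and invoking uniqueness in Gleason's Theorem shows this is independent of $N_0$, so $t$ is a well‑defined symmetric bilinear form with $D(t)=D$ and $t(x,x)=f(x)$ on $\mathcal S(D)$; it is semibounded because $\inf\{t(x,x):x\in\mathcal S(D)\}\ge\inf\{f(x):x\in\mathcal S(H)\}>-\infty$. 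For $x\in\mathcal S(H)\setminus\mathcal S(D)$ the value $f(x)$ is not finite while $f(x)\ge\inf\{f(u):u\in\mathcal S(H)\}>-\infty$, so $f(x)=\infty$; this is precisely the alternative in (2.2). Uniqueness is then formal: any semibounded symmetric bilinear form $t'$ with dense domain satisfying (2.2) must have $\mathcal S(D(t'))=\{x\in\mathcal S(H):f(x)<\infty\}=\mathcal S(D)$, hence $D(t')=D$, and $t'(x,x)=f(x)=t(x,x)$ on $\mathcal S(D)$ forces $t'=t$ by homogeneity and polarization.

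I expect the genuine obstacle to be the closure of $D$ under addition in the second paragraph — manufacturing an orthonormal companion $u_3$ with finite $f$‑value out of the dense set $E$ and then using the Lugovaya–Sherstnev Lemma~\ref{le:2.3} to eliminate the case $W_N=\infty$; everything else only repeats the finite‑dimensional Gleason‑type reasoning already used for Theorem~\ref{th:2.2}. It is worth noting that the hypothesis ``there is an ONB with all $f(x_i)$ finite'' is exactly what makes $D$ dense rather than a proper closed subspace, in contrast with the situation of Example~\ref{ex:2.1}(iii).
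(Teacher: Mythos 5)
Your proof is correct and follows essentially the same route as the paper's: you show that the set $D$ of normalized finiteness points is a dense linear submanifold by using the Lugovaya--Sherstnev Lemma \ref{le:2.3} to rule out infinite weight on an auxiliary three-dimensional subspace, and then you reduce to the construction of Theorem \ref{th:2.2}. The only difference is tactical --- the paper anchors its auxiliary subspaces on two fixed vectors $x_1,x_2$ of the given ONB and argues in two stages, whereas you choose an orthogonal companion $u_3\in E\cap{\rm sp}(x,y)^{\perp}$ from the algebraic span of the ONB (after first checking $f$ is finite on all of $\mathcal S(E)$), which is a harmless and arguably cleaner variant of the same idea.
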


\begin{proof}
If $H$ is finite-dimensional, $W_H$ and $f$ are finite, the proof follows the same ideas as the first part of the proof of Theorem \ref{th:2.2} with $D(t)=H$.

Now let $\dim H=\infty$ and let $\{x_i\}$ be an ONB of $H$ such that every $f(x_i)$ is finite. Denote by $D(f):=\{x \in H\setminus\{\mathbf 0\}: f(x/\|x\|)$ is finite$\}\cup \{\mathbf 0\}$.  Then $x_i \in D(f)$ for any $x_i$.  We assert that $D(f)$ is a linear submanifold of $H$.   Indeed, let $x,y \in D(f)$ be two nonzero vectors. Without loss of generality, we can assume  $x$ and $y$ are two unit vectors from $D(f)$ that are linearly independent. If $x\perp y,$ then clearly $x+y \in {\rm sp}(x,y)$.   Now let $x\not\perp y$.

Take two vectors $x_1,x_2$ from the ONB $\{x_i\}$ and set $P =\mbox{sp}(x_1,x_2)$.  If $x,y \in P,$ then $x+y \in P$ and $x+y \in D(f)$.  Let $x \notin P$ and define $M_1 = \mbox{sp}(x_1,x_2,x)$.  We assert the weight $W_{M_1}$ of $f$ on $M_1$ is finite; if not, by the Lemma Lugovaya--Sherstnev we conclude $x \in P$ which is absurd.

If $y \in M_1,$ then clearly $x+y \in M_1$ and $x+y \in D(t)$.  Assume finally $y \notin M_1$.  Choose a unit vector $z\in M_1$ that is orthogonal to $x$.  Again by Lemma \ref{le:2.3}, we conclude that $f$ on $M_2=\mbox{sp}(x,y,z)$ has finite weight, so that $x+y \in D(f)$.

Since $D(f)$ contains the ONB $\{x_i\},$ $S:=D(f)$ is dense in $H$.  Since $f|S \sim (f|S,S)$ is a semibounded finite frame type function on $H,$ by Theorem \ref{th:2.2}, there is a unique semibounded symmetric bilinear form $t$ with dense $D(t)=S=D(f)$ such that $f(x)=t(x,x),$ $x \in \mathcal S(S)$.  Hence, (2.2) holds.
\end{proof}

It is clear that if  $t$ is a positive symmetric bilinear form with $D(t)$ dense in $H$, then $f(x):=t(x,x),$ $x \in \mathcal S(D(t)),$ is a finite frame type function on $H$.  Thus for positive  finite frame type functions and  positive symmetric bilinear forms with dense domain there is a one-to-one correspondence. It is worthy of attention that for non-positive bilinear forms such a correspondence is not true, in general, as it follows from Example \ref{ex:2.1}(iv), where there is presented a bounded bilinear form not positive that is not a frame type function.

%We note that we do not know whether such a correspondence is true also for semibounded frame type functions and semibounded symmetric bilinear forms with dense domain. More precisely, it is not yet known whether each semibounded symmetric bilinear form $t$ determines a frame function $f(x)=t(x,x),$ $x \in \mathcal S(D(t))$.  %??? WE HAVE ADD something more: $t(x,x)=-(x,x)$ is semibounded but it do not get a frame type function in our sense while $\sum t(x_i,x_i)=-\infty$.   May be e.g. $t(x,x) >0$ for some $x \in \mathcal S(S)$.  May be better: there does not exist an infinite ONS $\{x_i\}$ such that $\sum_i t(x_i,x_i)=-\infty$.  ???

If $t$ is a positive symmetric bilinear form with dense domain, then $f$ defined on $\mathcal S(H)$ by (2.2) is not necessarily a frame function. For example, in \cite{Lug1}, see also \cite[Thm 3.7.2]{DvuG}, it is shown that if $t$ is a singular positive symmetric bilinear form with dense domain, then $f$ defined by (2.2) is not a frame function, but $f|D(t)$ is a positive finite frame type function.

We note that according to \cite{DoSh}, \cite[Thm 3.2.20]{DvuG}, every finite frame type function $f\sim (f,S)$, such that $\{f(x_i)\}$ is summable for any ONS $\{x_i\}$ in $S$,  is bounded. We do not know whether every (our) finite frame type function is semibounded if $\dim H=\infty$. As it was already mentioned, according to \cite[Prop 3.2.4]{DvuG}, there are unbounded frame type functions for $H$, $\dim H=n \ge 2$, that are not generated by any bilinear form.

\section{Gleason Measures on $\mathcal L(H)$ and Generalized Effect Algebras}

In this section, we study finitely additive measures that are regular. Different classes of measures will be studied from the point of view of generalized effect algebras.

A finitely additive measure $m$ on $\mathcal L(H)$ is said to be (i) {\it regular} if
$$ m(M)=\sup\{m(P)\colon P\subseteq M, P\in \mathcal L(H), \dim P<\infty\},\quad M \in \mathcal L(H),
$$
and (ii) $\mathcal P(H)_1$-{\it bounded} if $\sup\{m(\mathrm{sp}(x): x \in D(m)\}< \infty$, where
$$
D(m):=\{ x\in H\colon m({\rm sp}(x))<\infty \}\cup \{\mathbf 0\}.
$$

According to \cite{DvuG}, $m$ satisfies (i) the {\it L-S property} (L-S stands for Lugovaja and Sherstnev), if there is a two-dimensional subspace $Q$ of $H$ such that $m(Q)<\infty$, (ii) the {\it density property} if
$D(m)$ is dense in $H$, and (iii) the {\it L-S density property} if both (i) and (ii) hold.

We note that the L-S property entails that $D(m)$ is a dense linear subspace of $H$.  If $o$ is the zero measure $o$, i.e. $o(M)=0$, $M\in \mathcal L(H)$, then $o$ is a regular measure with the L-S property. In addition, if $\dim H \ne 2$, then a finitely additive measure $m$ has the L-S density property iff $m$ is $\sigma$-finite.

We denote by $\mathrm{Reg}(H)$  the class of regular measures with the L-S property.

An important relationship between regular finitely additive measures with the L-S property and positive bilinear forms with  dense domain is the following result, for the proof see \cite[Lem 3.7.8. Thm 3.7.9]{DvuG}. Before that we remind an important notation: Let $t$ be a bilinear form with domain $D(t)$, let $M \in \mathcal L(H)$ be given, and let $P_M$ be the orthogonal projector of $H$ onto $M$. We define a new bilinear form $t\circ P_M$ as a bilinear form whose domain is the set $D(t\circ P_M):=\{x\in H\colon P_Mx \in D(t)\}$. If this bilinear form is determined by a trace operator $T^t_M$, we write $t\circ P_M \in \mathrm{Tr}(H)$ and $\mathrm{tr}(t\circ P_M):= \mathrm{tr}(T^t_M).$

\begin{theorem}\label{th:3.1}
Let $H$ be a Hilbert space.

{\rm (1)} Let $t$ be a
positive bilinear form such that $D(t)$ is dense in H. Then
the mapping $m_t: \mathcal L(H)\to [0,\ \infty]$ given by

$$\label{eq:3.37}
m_t(M) = \left\{ \begin{array}{ll}
{\displaystyle {\rm {\rm tr}} (t\circ P_M}) &\quad {\rm if} \ {\displaystyle  t\circ P_M \in {\rm Tr}(H)},\\[4pt]
{\displaystyle \infty} & \quad  {\rm otherwise},
\end{array}
\right. \eqno(3.1)
$$%\end{equation}
is a regular finitely additive measure with the L-S density
property.

{\rm (2)} Let $m$ be a regular finitely
additive measure with the L-S density property on $\mathcal L(H)$, $\dim H\ne 2$. Then there
exists a unique bilinear form $t$ with domain $D(t)=D(m)$ such that {\rm (3.1)} holds.
\end{theorem}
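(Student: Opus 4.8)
The plan is to treat the two implications separately. For (1) I would verify directly that $m_t$ satisfies the measure axioms, using positivity of $t$ to make all the relevant diagonal sums behave exactly like traces of positive operators. For (2) I would extract from $m$ the function $f(x):=m(\mathrm{sp}(x))$, show that it is a positive finite frame type function on $H$ in the sense of Section~2, feed it into Theorem~\ref{th:2.2} to obtain a candidate bilinear form $t$ with $D(t)=D(m)$, and then check that this $t$ reproduces $m$ through formula (3.1).

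For (1), first $m_t(\mathrm{sp}(\mathbf 0))=0$ since $t\circ P_{\{\mathbf 0\}}$ is the zero form. For finite additivity with $M\perp N$, the key computation is to take an orthonormal basis $\{e_i\}$ of $H$ obtained by concatenating ONB's of $M$, of $N$, and of $(M\vee N)^\perp$: since $P_{M\vee N}=P_M+P_N$ acts as the identity on $M$ and on $N$ and kills $(M\vee N)^\perp$, the diagonal sum of $t\circ P_{M\vee N}$ along $\{e_i\}$ splits as $\sum_{e_i\in M}t(e_i,e_i)+\sum_{e_i\in N}t(e_i,e_i)$, and positivity of $t$ makes each summand a well-defined element of $[0,\infty]$, independent of the adapted ONB (it is the supremum of its finite partial sums, as for the trace of a positive operator); this gives $m_t(M\vee N)=m_t(M)+m_t(N)$, with $t\circ P_{M\vee N}\in\mathrm{Tr}(H)$ iff $t\circ P_M$ and $t\circ P_N$ both are. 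Regularity follows from the same description: $m_t(M)$ is the supremum over finite subsets $F$ of an ONB of $M$ of $\sum_{e_i\in F}t(e_i,e_i)=m_t(\mathrm{sp}\,F)$, hence equals $\sup\{m_t(P)\colon P\subseteq M,\ \dim P<\infty\}$. Finally, for a unit vector $x$: if $x\in D(t)$ then $t\circ P_{\mathrm{sp}(x)}$ is the everywhere defined rank-one form $y\mapsto t(x,x)\,|(y,x)|^2$ and $m_t(\mathrm{sp}(x))=t(x,x)<\infty$, whereas if $x\notin D(t)$ then $D(t\circ P_{\mathrm{sp}(x)})=\mathrm{sp}(x)^\perp$ is not dense and $m_t(\mathrm{sp}(x))=\infty$. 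Thus $D(m_t)=D(t)$, which is dense, and any two-dimensional $Q\subseteq D(t)$ has $m_t(Q)<\infty$, so $m_t$ has the L-S density property.

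For (2), put $S:=D(m)$, a dense linear subspace of $H$ by the L-S property, and let $f(x):=m(\mathrm{sp}(x))$ for $x\in\mathcal S(S)$, which is finite there by definition of $D(m)$. I claim $f\sim(f,S)$ is a positive finite frame type function. Condition~(i) of the definition is automatic from $f\ge 0$, since a series of non-negative reals is summable or diverges to $+\infty$; for condition~(ii), if $K\subseteq S$ is finite-dimensional and $N\in\mathcal L(K)$, then every ONB $\{x_j\}$ of $N$ lies in $D(m)$ and finite additivity of $m$ gives $\sum_j f(x_j)=\sum_j m(\mathrm{sp}(x_j))=m(N)$, independently of the ONB, so $f|\mathcal S(K)$ is a frame function with weights $W_N=m(N)$. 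Being positive, $f$ is semibounded, so Theorem~\ref{th:2.2} (this is the point where $\dim H\ne 2$ is used) yields a unique semibounded, hence positive, symmetric bilinear form $t$ with $D(t)=S=D(m)$ and $t(x,x)=m(\mathrm{sp}(x))$ for all $x\in\mathcal S(S)$.

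It remains to verify (3.1). If $m(M)<\infty$, monotonicity of $m$ gives $m(\mathrm{sp}(x))\le m(M)<\infty$ for every unit $x\in M$, so $M\subseteq D(m)=D(t)$, $t\circ P_M$ is everywhere defined, and evaluating its diagonal sum along an ONB of $H$ adapted to $M\oplus M^\perp$ gives $\sum_{e_i\in M}t(e_i,e_i)=\sum_{e_i\in M}m(\mathrm{sp}(e_i))=m(M)$ by regularity; a positive, everywhere defined symmetric bilinear form on $H$ with finite diagonal sum along some ONB is determined by a positive trace operator $T^t_M$ whose trace is that sum (for $\dim M\ne 2$ this is the trace-operator version of Gleason's theorem quoted after Theorem~\ref{th:2.2}, and for $\dim M=2$ one argues directly, $t$ restricted to the two-dimensional $M$ already being given by a Hermitian operator on $M$), so $t\circ P_M\in\mathrm{Tr}(H)$ and $\mathrm{tr}(t\circ P_M)=m(M)$. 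If $m(M)=\infty$, then either $M\not\subseteq D(t)$, in which case $t\circ P_M$ fails to be densely defined, or $M\subseteq D(t)$ and the diagonal sum above is $m(M)=\infty$; in both cases $t\circ P_M\notin\mathrm{Tr}(H)$ and (3.1) returns $\infty=m(M)$. Uniqueness is immediate, since any bilinear form satisfying (3.1) has $D(t)=D(m)$ and, by (3.1) for one-dimensional $M$, satisfies $t(x,x)=m(\mathrm{sp}(x))$ on $\mathcal S(D(m))$, hence is determined by polarization. The step I expect to be the main obstacle is exactly the one just sketched for $m(M)<\infty$: one has to control the domain of $t\circ P_M$ simultaneously with the relevant diagonal sum, verify that this sum is ONB-independent and finite, and then upgrade the resulting positive form to an honest trace operator on $M$ via the appropriate Gleason-type theorem, handling $\dim M=2$ separately; a secondary but essential subtlety is checking that $m(\mathrm{sp}(\cdot))$ is genuinely a frame type function, which is where regularity (to pass from finite-dimensional subspaces to all of $\mathcal L(H)$) and positivity (to tame the series) enter.
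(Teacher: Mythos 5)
The paper itself does not prove Theorem~\ref{th:3.1}; it quotes it from \cite[Lem 3.7.8, Thm 3.7.9]{DvuG}, so your argument can only be checked against the statement and the surrounding theory. Checked that way, it has one recurring, genuine gap: in both parts you identify $m_t(M)$ (equivalently $\mathrm{tr}(t\circ P_M)$, with the convention that the value is $\infty$ when $t\circ P_M\notin \mathrm{Tr}(H)$) with the diagonal sum $\sum_i t(e_i,e_i)$ along one fixed ONB of $M$ contained in $D(t)$, asserting that this sum is independent of the ONB ``as for the trace of a positive operator.'' For positive bilinear forms not generated by an operator this is false, and it fails in exactly the situation the theorem is designed to cover. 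Take the singular form $t$ of Example~\ref{ex:sing}: $D(t)=H$ and $t(e_n,e_n)=0$ for every member of the orthonormal basis $\{e_n\}$, so the diagonal sum along $\{e_n\}$ is $0$; yet $t(e_{i_0},e_{i_0})=1$, so the diagonal sum along an ONB containing $e_{i_0}$ is positive, and $t=t\circ P_H\notin\mathrm{Tr}(H)$ (a positive trace operator with vanishing diagonal along an ONB is zero), whence $m_t(H)=\infty\neq 0$. If your description of $m_t$ were correct, every $m_t$ would be completely additive (concatenate ONBs of the summands), contradicting the fact, used essentially in Theorem~\ref{th:3.4}, that singular forms produce regular measures that are not $\sigma$-additive. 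The same error recurs in part (2) in the claims ``$\sum_{e_i\in M}m(\mathrm{sp}(e_i))=m(M)$ by regularity,'' ``a positive everywhere defined form with finite diagonal sum along \emph{some} ONB is determined by a trace operator,'' and ``if $M\subseteq D(t)$ and $m(M)=\infty$ the diagonal sum is $\infty$'' --- each is refuted by the same singular example.

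The repair is to replace ``diagonal sum along a fixed adapted ONB'' everywhere by the supremum of $\sum_{j=1}^n t(f_j,f_j)$ over all \emph{finite} ONS $\{f_1,\dots,f_n\}$ in $M$, which by finite-dimensional Gleason equals $\sup\{m_t(P)\colon P\subseteq M,\ \dim P<\infty\}$. The missing key lemma is: $t\circ P_M\in\mathrm{Tr}(H)$ if and only if $M\subseteq D(t)$ and this supremum is finite, in which case it equals $\mathrm{tr}(t\circ P_M)$. One direction is the standard fact that the trace of a positive trace-class operator dominates and is attained by such sums; for the converse, a finite supremum bounds $t(x,x)$ on $\mathcal S(M)$, so $t|_{M\times M}$ is bounded, hence by the Cauchy--Schwarz inequality for positive forms it is generated by a bounded positive operator on $M$, which the same bound forces to be trace class. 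With this lemma, regularity is immediate; finite additivity follows because every finite-dimensional $P\subseteq M\oplus N$ lies in $P_MP\oplus P_NP$ and additivity on finite-dimensional subspaces is an honest matrix-trace computation; and in part (2) the finiteness of $m(M)$ together with monotonicity bounds all finite ONS sums in $M$ by $m(M)$, which is what legitimately licenses the appeal to the trace-operator version of Gleason's theorem quoted after Theorem~\ref{th:2.2} (your appeal to ``finite diagonal sum along some ONB'' does not). Your overall architecture --- direct verification for (1); frame type function plus Theorem~\ref{th:2.2} plus verification of (3.1) for (2) --- is sound once this lemma is in place.
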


Since formula (3.1) is a generalization of the famous Gleason formula (2.1), we call measures expressed by bilinear forms via (3.1) also Gleason measures.

We note that by Theorem \ref{th:3.1}, if $m$ is a regular finitely additive measure with the L-S property, there is a unique positive positive bilinear form $t$ with $D(t)=D(m)$ such that (3.1) holds. In other words, if $x\in D(m)$, we have $t(x,x)=\|x\|^2m(\mathrm{sp}(x))$ if $x\ne \mathbf 0$, otherwise, $t(\mathbf 0,\mathbf 0)=0.$

On the other hand, if $m$ is a finite positive regular measure on $\mathcal L(H)$, then $m$ is completely additive \cite[Thm 3.7.2]{DvuG}. For unbounded measures this is not a case, in general. More precisely, any positive singular bilinear form $t$ generates by (3.1) a regular measure that is not completely-additive, see \cite[Thm 3.7.6]{DvuG}, \cite{Lug2}. A criterion when a bilinear form $t$ generates a completely additive measure on $\mathcal L(H)$, see \cite{Lug2}, \cite[Thm 3.7.5]{DvuG}: A positive bilinear form $t$ defines through (3.1) a $\sigma$-finite completely additive measure on $\mathcal L(H)$ iff for any $M \in \mathcal L(H)$,
$$
\overline{(t\circ P_M)_r} \in \mathrm{Tr}(H) \ \mathrm{implies} \ t \circ P_M \in \mathrm{Tr}(H), \eqno(3.2)
$$
where $\overline{(t\circ P_M)_r}$ is the closure of the regular part of $t\circ P_M$.

\begin{proposition}\label{pr:3.2}
If $m_1,m_2$ are regular measures on $\mathcal L(H)$, then $m_1+m_2$ is also a regular measure.
\end{proposition}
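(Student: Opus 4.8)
The plan is to put $m:=m_1+m_2$ and argue directly from the definition of regularity. First I would record that $m$ is well defined and a finitely additive measure: since $m_1,m_2$ are non-negative, the sum $m(M)=m_1(M)+m_2(M)$ lies in $[0,\infty]$ with no indeterminate form, $m(\mathrm{sp}(\mathbf 0))=0$, and additivity over orthogonal pairs is inherited termwise, i.e. $m(M\vee N)=m(M)+m(N)$ for $M\perp N$ because $m_i(M\vee N)=m_i(M)+m_i(N)$. In particular $m$ is monotone, and from monotonicity the inequality $m(M)\ge\sup\{m(P)\colon P\subseteq M,\ P\in\mathcal L(H),\ \dim P<\infty\}$ holds for every $M\in\mathcal L(H)$ with no further work, so only the reverse inequality is at issue.

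For the reverse inequality I would split into two cases. If $m(M)<\infty$, then $m_1(M)$ and $m_2(M)$ are both finite; given $\varepsilon>0$, regularity of $m_1$ supplies a finite-dimensional $P_1\subseteq M$ with $m_1(P_1)>m_1(M)-\varepsilon/2$, and regularity of $m_2$ a finite-dimensional $P_2\subseteq M$ with $m_2(P_2)>m_2(M)-\varepsilon/2$. The key move is to pass to the single finite-dimensional subspace $P:=P_1\vee P_2\subseteq M$ (finite-dimensional since $\dim P\le\dim P_1+\dim P_2$): monotonicity of $m_1$ and $m_2$ gives $m(P)=m_1(P)+m_2(P)\ge m_1(P_1)+m_2(P_2)>m_1(M)+m_2(M)-\varepsilon=m(M)-\varepsilon$, and since $\varepsilon$ is arbitrary the supremum is at least $m(M)$. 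If $m(M)=\infty$, then $m_1(M)=\infty$ or $m_2(M)=\infty$, say the former; regularity of $m_1$ then yields, for each $K>0$, a finite-dimensional $P\subseteq M$ with $m_1(P)>K$, whence $m(P)\ge m_1(P)>K$ because $m_2(P)\ge 0$, so the supremum is $\infty=m(M)$. In both cases $m(M)$ equals the supremum of $m$ over finite-dimensional subspaces of $M$, which is regularity.

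I do not expect a genuine obstacle here: the proof is essentially routine, the only points needing care being the use of the join $P_1\vee P_2$ rather than $P_1$ or $P_2$ separately (so that a single finite-dimensional subspace works for both summands at once), together with the monotonicity of $m_1,m_2$, and the deliberate isolation of the $\infty$-valued case so that one never manipulates $\infty$ directly. I would also remark that this argument establishes only the bare regularity asserted in the statement; if one additionally wanted $m_1+m_2$ to inherit the L-S property or the density property, that would be a more delicate matter, since $D(m_1+m_2)=D(m_1)\cap D(m_2)$ and an intersection of dense submanifolds of $H$ need not be dense, but that is not needed for the proposition as formulated.
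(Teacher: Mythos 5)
Your proof is correct and follows essentially the same route as the paper's: the key move in both is to replace the two approximating finite-dimensional subspaces $P_1\subseteq M$ and $P_2\subseteq M$ by their join $P_1\vee P_2$ and invoke monotonicity of $m_1$ and $m_2$. The paper phrases this with sequences and limits ($m_i(M)=\lim_n m_i(P_n^1\vee P_n^2)$) rather than your $\varepsilon$-and-cases formulation, but your explicit separation of the $m(M)=\infty$ case is just a more careful writing of the same argument.
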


\begin{proof}
Given $M\in \mathcal L(H)$ and $i=1,2$, there are two sequences $\{P_n^1\}$ and $\{P_n^2\}$  of finite-dimensional subspaces of $M$ such that $m_i(M)=\lim_n m_i(P_n^i)$. Then $m_i(M) = \lim_n m_i(P_n^1\vee P_n^2)$ which entails $(m_1+m_2)(M)=m_1(M)+m_2(M)=\lim_n m_1(P_n^1\vee P_n^2)+ \lim_n m_2(P_n^1\vee P_n^2)=\lim_n (m_1+m_2)(P_n^1\vee P_n^2)$ which proves regularity of $m_1+m_2$.
\end{proof}

Proposition \ref{pr:3.2} suggests to study the set of regular Gleason measures also from the point of view of generalized effect algebras.

%Now, for two regular measures $m_1$ and $m_2$ on $\mathcal L(H)$ with the L-S property, $\dim H\ne 2$, we define a partial operation $\oplus$ as follows: $m_1\oplus m_2$ is defined if either $D(m_1)=D(m_2)$ or $D(m_1)\ne H$ and $D(m_2)=H$ or $D(m_1)=H$ and $D(m_2)=H$

%\begin{definition}%[\cite{H&P}]\label{def:GEA}
A partial algebra $(E;\oplus ,0)$ is called a {\em generalized
effect algebra} if $0\in E$ is a distinguished element and
$\oplus $ is a partially defined binary operation on $E$ which
satisfies the following conditions for any $x,y,z\in E$:

\begin{itemize}%[\hbox{\rm(GEiii)}]
\setlength{\leftmargin}{1.8cm}%\width{\phantom{\rm(GEiii)}}}
\item[\rm(GEi)]  $x\oplus  y=y\oplus  x$, if one side is defined,
\item[\rm(GEii)]
$(x\oplus  y)\oplus  z=x\oplus  (y\oplus  z)$, if one side is defined,
\item[\rm(GEiii)] $x\oplus  0=x$,
\item[\rm(GEiv)] $x\oplus  y=x\oplus z$ implies $y=z$
(cancellation law),
\item[\rm(GEv)]
$x\oplus  y=0$ implies $x=y=0$.
\end{itemize}

For every generalized effect algebra $E$, a partial binary
operation $\ominus $ and a relation $\leq:=\leq_\oplus$ can be defined by
\begin{itemize}
\item[\rm(ED)] $x\le y$ and $y\ominus x=z$ iff $x\oplus z$ is
defined and $x\oplus  z=y$.
\end{itemize}
Then $\le$ is a partial order on $E$ under which $0$ is the least
element of $E$. A generalized effect algebra with the top element $1 \in E$ is called an {\it effect algebra} and we usually write $(E;\oplus,0,1)$ for it.

For example, let $G$ be an Abelian po-group, i.e. a group $G=(G;+,0)$ endowed with a partial order $\le$ such that $a\le b$, then $a+c\le b+c$ for each $c \in G$. Let $G^+:=\{g\in G: g\ge 0\}$ be the positive cone of $G$. Then $(G^+;\oplus,0)$ is a generalized effect algebra, where $a\oplus b = a+b$, $a,b \in G^+$. If $u\in G^+$ is a fixed element, then the interval $[0,u]:=\{g \in G: 0\le g \le u\}$ defines an effect algebra $([0,u];\oplus,0,u)$, where $a\oplus b = a+b$, $a,b \in [0,u]$, whenever $a+b\le u.$

In particular, if $\mathcal B(H)$ is the class of all Hermitian operators of a Hilbert space $H$, then $\mathcal B(H)$ is a po-group, saying $A\le B$ iff $(Ax,x)\le (Bx,x)$, $x \in H$, and $\mathcal B(H)^+$ is a generalized effect algebra. If $\mathcal E(H):=\{A \in \mathcal B(H): O\le A \le I\}$, where $O$ and $I$ are the zero and  identity operators, respectively, then $\mathcal E(H)$ is a prototypical example of effect algebras important for Hilbert space quantum mechanics.

A subset $S$ of $E$ is called a {\it sub-generalized effect algebra} ({\it sub-effect algebra}) of $E$ iff {\rm (i)} $0 \in S$ ($1\in S$), {\rm (ii)}
if out of elements $x,y,z \in E$ such that $x\oplus y=z$ at least two are in $S$, then all $x,y,z \in S$.%}
%\end{definition}

The following theorem follows ideas from \cite{RZP}, where a structure of linear operators was described. 

\begin{theorem}\label{th:3.3}
Let $H$ be an infinite-dimensional complex Hilbert space. Let
$\mathrm{Reg}_f(H)$ be the set of regular finitely additive measures $m$ on $\mathcal L(H)$ with the L-S density property such that if $m$ is $\mathcal P_1(H)$-bounded, then $D(m) = H$.
Let us define a partial operation $\oplus$ on $\mathrm{Reg}_f(H)$:
For $m_1,m_2 \in \mathrm{Reg}_f(H)$, $m_1\oplus m_2$ is defined if and only if  $m_1$ or $m_2$ is $\mathcal P_1(H)$-bounded or $D(m_1) = D(m_2)$
and then $m_1\oplus m_2 := m_1+ m_2$.
Then $(\mathrm{Reg}_f(H);\oplus, o)$ is a generalized effect algebra.
\end{theorem}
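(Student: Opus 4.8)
The plan is to verify the five generalized-effect-algebra axioms (GEi)--(GEv) for $(\mathrm{Reg}_f(H);\oplus,o)$ directly, leaning on Proposition~\ref{pr:3.2} (regularity is preserved under sums), on Theorem~\ref{th:3.1} (the bijection between regular measures with the L-S density property and positive densely defined bilinear forms), and on the observation that $D(m_1+m_2)=D(m_1)\cap D(m_2)$ whenever the sum is taken. The zero measure $o$ is $\mathcal P_1(H)$-bounded with $D(o)=H$, so $o\in\mathrm{Reg}_f(H)$, and it will play the role of the neutral element.

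First I would check that $\oplus$ is well defined, i.e. that $m_1\oplus m_2$ actually lands in $\mathrm{Reg}_f(H)$ when it is defined. By hypothesis the sum is taken only when $m_1$ or $m_2$ is $\mathcal P_1(H)$-bounded or $D(m_1)=D(m_2)$. In the equal-domain case $D(m_1+m_2)=D(m_1)=D(m_2)$, the L-S density property is inherited, regularity follows from Proposition~\ref{pr:3.2}, and one checks the extra clause ``$\mathcal P_1(H)$-bounded $\Rightarrow D(m)=H$'' holds for the sum. In the case where, say, $m_1$ is $\mathcal P_1(H)$-bounded, one has $D(m_1)=H$, hence $D(m_1+m_2)=D(m_2)$, so density and the side condition transfer from $m_2$; again regularity comes from Proposition~\ref{pr:3.2}. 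Commutativity (GEi) is immediate from commutativity of $+$ and the symmetry of the definedness condition; (GEiii) $m\oplus o=m$ is immediate since $o$ is $\mathcal P_1(H)$-bounded, so the sum is always defined, and equals $m+o=m$; (GEv) follows because $m_1+m_2=o$ with $m_1,m_2\ge 0$ forces $m_1=m_2=o$.

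The cancellation law (GEiv) and associativity (GEii) are where care is needed. For (GEiv), suppose $m_1\oplus m_2=m_1\oplus m_3$; as real-valued (possibly improper) functions on $\mathcal L(H)$ the equation $m_1+m_2=m_1+m_3$ gives $m_2=m_3$ on the subspaces where $m_1$ is finite, in particular on all finite-dimensional subspaces of $D(m_1)$, which is dense; then regularity of $m_2,m_3$ together with the density property propagates the equality to all of $\mathcal L(H)$. (One must be mildly careful that the equation $m_1+m_2=m_1+m_3$ is literally meaningful, i.e. that the relevant values are not $\infty-\infty$; this is handled by restricting first to $D(m_1)$ and using that at most one improper value is attained by each measure.) For associativity, the main obstacle is purely combinatorial bookkeeping of the definedness clause: one must show that $(m_1\oplus m_2)\oplus m_3$ is defined if and only if $m_1\oplus(m_2\oplus m_3)$ is, and when defined they agree (agreement being automatic from associativity of $+$). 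This splits into cases according to which of the $m_i$ are $\mathcal P_1(H)$-bounded and which domains coincide; the key facts making it work are that $\mathcal P_1(H)$-bounded measures have domain $H$, that a sum of two measures with equal domain $D$ again has domain $D$, and that a sum of a $\mathcal P_1(H)$-bounded measure with an arbitrary $m$ has domain $D(m)$ (so it is $\mathcal P_1(H)$-bounded precisely when $m$ is, in which case $D(m)=H$ by the defining condition of $\mathrm{Reg}_f(H)$). I expect the case analysis for (GEii) to be the most tedious step, but each case reduces to these three domain identities, so no genuinely new idea is required beyond Proposition~\ref{pr:3.2} and the structure of $D(\cdot)$.
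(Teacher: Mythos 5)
Your overall architecture (closure under $\oplus$, then the five axioms by case analysis on $\mathcal P_1(H)$-boundedness and on the domains) matches the paper's, and most steps are fine. The genuine gap is in your argument for the cancellation law (GEiv). From $m_1\oplus m_2=m_1\oplus m_3$ you correctly deduce $m_2(P)=m_3(P)$ for every finite-dimensional $P\subseteq D(m_1)$, but your claim that ``regularity of $m_2,m_3$ together with the density property propagates the equality to all of $\mathcal L(H)$'' fails in the mixed case where $m_1$ is not $\mathcal P_1(H)$-bounded (so $D(m_1)\ne H$) while $m_2$ is $\mathcal P_1(H)$-bounded (so $D(m_2)=H$). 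Regularity approximates $m_2(M)$ from inside by finite-dimensional subspaces of $M$, but those subspaces need not lie in $D(m_1)$; for $M=\mathrm{sp}(x)$ with $x\notin D(m_1)$ the only proper closed subspace is $\mathrm{sp}(\mathbf 0)$, so regularity gives no information, and agreement on the dense submanifold $D(m_1)$ does not by itself determine $m_2$ and $m_3$ off $D(m_1)$: a positive bilinear form can agree with a bounded one on a dense subspace without coinciding with its bounded extension (cf.\ the singular form of Example~\ref{ex:sing}, which vanishes on a dense submanifold without being zero).

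The paper closes exactly this hole by passing to the bilinear forms $t_i$ supplied by Theorem~\ref{th:3.1}: from $t_2=t_3$ on $D(t_1)$ and boundedness of $t_2$ there, one first argues that $m_3$ must also be $\mathcal P_1(H)$-bounded (otherwise the definedness of $m_1\oplus m_3$ forces $D(t_3)=D(t_1)$, on which $t_3$ is bounded, a contradiction), hence $D(t_3)=H$, and then invokes the unique extension of a densely defined bounded bilinear form to all of $H$ (\cite[Lem 4.1]{DvJa}) to conclude $t_2=t_3$ and so $m_2=m_3$. You need this continuity/unique-extension ingredient, or an equivalent appeal to the uniqueness clause of Theorem~\ref{th:3.1}; regularity alone does not supply it. The remaining cases of (GEiv) (all domains equal, or $D(m_1)=H$) do go through essentially as you describe, and your treatment of closure under $\oplus$, of (GEi)--(GEiii) and of (GEv) is consistent with the paper's, the only cosmetic difference being that the paper establishes closure by identifying $m_1+m_2$ with the Gleason measure of $t_1+t_2$ rather than by citing Proposition~\ref{pr:3.2}.
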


\begin{proof}
Let $m_1,m_2 \in  \mathrm{Reg}_f(H)$ be measures such that $m_1 \oplus m_2$ is defined. By Theorem \ref{th:3.1}, there are unique positive positive bilinear forms $t_1, t_2$ with $D(t_1)=D(m_1), D(t_2)=D(m_2)$ such that (3.1) holds. %Let us consider a bilinear form $t_1 + t_2$. By definition of $\oplus$ can be seen that $D(t_1 + t_2) = D(t_1) \cap D(t_2)$ is dense linear subspace.
Using \cite[Thm 4.1]{DvJa}, we have that $t_1 + t_2$ is a densely defined positive bilinear form on $D(t_1 + t_2) = D(t_1) \cap D(t_2)$. Let $M \in {\mathcal L}(H)$ be such that $m_1(M), m_2(M) < \infty$. Then $(m_1 \oplus m_2) (M) =  m_1(M) + m_2(M) = \tr(t_1 \circ P_{M}) + \tr(t_2 \circ P_{M}) = \tr(t_1 \circ P_M + t_2\circ P_M) = \tr((t_1 + t_2) \circ P_M)$. If $m_1(M) = \infty$ or $m_2(M) = \infty$, we have $t_1 \circ P_{M} \notin \mathrm{Tr}(H)$ or $t_2 \circ P_{M} \notin \mathrm{Tr}(H)$ which yields $(t_1 + t_2) \circ P_{M} \notin \mathrm{Tr}(H)$. Therefore, $m_1 \oplus m_2$ is determined by $t_1 + t_2$ in the sense of (3.1) i.e. it is regular and possesses the L-S density property.

(GEi) It follows from the commutativity of the usual sum $+$.

(GEii) Let  $m_1,m_2,m_3 \in  \mathrm{Reg}_f(H)$ such that $(m_1\oplus m_2)\oplus m_3$ is defined. First, let us assume that $D(m_1) = H$. Then $D(m_1)\cap D(m_2) = D(m_2)$ and $D(m_2) = D(m_3)$ or $D(m_3) = H$. That is, $m_2 \oplus m_3$ is defined with $D(m_2 + m_3) = D(m_2)$, hence also $m_1 \oplus (m_2 \oplus m_3)$ is defined. The same analogy holds for $D(m_2) = H$. If $D(m_1),D(m_2) \not= H$, we have $D(m_1) = D(m_2)$ and then $D(m_3) = D(m_2)$ or $D(m_3) = H$, which yields the existence of $m_1 \oplus (m_2 \oplus m_3)$. The equality follows from the associativity of the usual sum $+$.

(GEiii) It holds since $o \in \mathrm{Reg}_f(H)$ is $\mathcal P_1(H)$-bounded, hence $D(o) = H$.

(GEiv) Let $m_1 \oplus m_2 = m_1 \oplus m_3$.  By Theorem \ref{th:3.1}, for any $i=1,2,3$, there is a unique positive positive bilinear form $t_i$ with $D(t_i)=D(m_i)$ such that (3.1) holds.

Let $m_1, m_2, m_3$ be all $\mathcal P_1(H)$-bounded. Then $t_1(x,y) + t_2(x,y) = t_1(x,y)+t_3(x,y)$ for all $x,y \in H$, which is equivalent to $t_2(x,y)=t_3(x,y)$ for all $x,y \in H$
hence $t_2 = t_3$, so that $m_2=m_3$ by Theorem \ref{th:3.1}.

Let $m_1$ be $\mathcal P_1(H)$-bounded and let $m_2$ be not $\mathcal P_1(H)$-bounded. Then $m_3$ has to be not $\mathcal P_1(H)$-bounded and $D(m_2) = D(m_1+m_2) = D(m_1+m_3)=D(m_3)$. By the same argument as in the previous case (restricted on $D(m_2)$) we have $t_2=t_3$ and consequently,
$m_2=m_3$.

Let $m_1$ be not $\mathcal P_1(H)$-bounded and let $m_2$ be $\mathcal P_1(H)$-bounded. Then  $D(m_1) =D(m_1+m_2) = D(m_1+m_3)$. For every $x, y \in D(t_1),$ we have $t_1(x,y) + t_2(x,y) = t_1(x,y)+t_3(x,y).$ Hence, $t_2(x,y)=t_3(x,y)$, that is
$t_3{\mid D(t_1)} = t_2{\mid D(t_1)}$. Since $t_2$ is on $D(t_1)$ bounded, $t_3$ is also bounded and by \cite[Lem 4.1]{DvJa}, $t_3$ can be extended on $H$ in a unique way to a symmetric bilinear form, that is $t_2=t_3$ which implies $m_2 = m_3$.

Let $m_1$  and $m_2$ be not $\mathcal P_1(H)$-bounded. Then $D(m_1) = D(m_2) = D(m_1+m_2) = D(m_1+m_3)$ and in the same way as in the previous case, $t_3{\mid D(t_1)} = t_2{\mid D(t_1)}$. Since $m_2$ is not  $\mathcal P_1(H)$-bounded, $m_3$ is also not $\mathcal P_1(H)$-bounded. Because $m_1 \oplus m_3$ is defined, we have $D(t_1) = D(t_3) = D(t_2),$ hence $t_3=t_2$ and consequently, $m_3 = m_2$.

(GEv) Assume that there is $m_1, m_2 \in \mathrm{Reg}_f(H)$ satisfying $m_1 \oplus m_2 = o$. Then $D(m_1 \oplus m_2) = D(o) = H$ and there exist unique positive bilinear forms $t_1, t_2$ with $D(t_1)=D(t_2) = H$ such that (3.1) holds. Using \cite[Thm 4.1]{DvJa}, we have $t_1 = t_2 = o$ which means $m_1 = m_2 = o$.
\end{proof}

\begin{theorem}\label{th:3.4}
Let $H$ be a separable infinite-dimensional complex Hilbert space. Let  $\mathrm{Reg}^{\sigma}_f(H)$ be the set of $\sigma$-additive measures from $\mathrm{Reg}_f(H)$. Then $(\mathrm{Reg}^{\sigma}_f(H);$ $\oplus_{\mid\mathrm{Reg}^{\sigma}_f(H)},o)$ is a generalized effect algebra, but $\mathrm{Reg}^{\sigma}_f(H)$ is not a sub-generalized effect algebra of the generalized effect algebra $(\mathrm{Reg}_f(H);\oplus, o)$.
\end{theorem}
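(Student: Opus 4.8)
The plan is to obtain both assertions from the structure of $\mathrm{Reg}_f(H)$ established in Theorem \ref{th:3.3}, together with the bilinear-form correspondence of Theorem \ref{th:3.1} and the singular form of Example \ref{ex:sing}. For the first assertion I would first note that $\mathrm{Reg}^{\sigma}_f(H)$ is, by definition, a subset of $\mathrm{Reg}_f(H)$ containing $o$ (which is $\sigma$-additive), so it is enough to show that $\mathrm{Reg}^{\sigma}_f(H)$ is closed under the partial operation $\oplus$ of $\mathrm{Reg}_f(H)$; once this is done, the axioms (GEi)--(GEv) for $(\mathrm{Reg}^{\sigma}_f(H);\oplus_{\mid\mathrm{Reg}^{\sigma}_f(H)},o)$ follow by restriction from those of $(\mathrm{Reg}_f(H);\oplus,o)$, the only point worth spelling out being that in (GEii) closure guarantees that the intermediate measure $m_2\oplus m_3$ again belongs to $\mathrm{Reg}^{\sigma}_f(H)$. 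For closure, let $m_1,m_2\in\mathrm{Reg}^{\sigma}_f(H)$ with $m_1\oplus m_2$ defined; by Theorem \ref{th:3.3} we have $m_1\oplus m_2=m_1+m_2\in\mathrm{Reg}_f(H)$, and $\sigma$-additivity is immediate: for any countable orthogonal family $\{M_i\}$ with $M=\bigvee_iM_i$ one gets $(m_1+m_2)(M)=m_1(M)+m_2(M)=\sum_im_1(M_i)+\sum_im_2(M_i)=\sum_i(m_1+m_2)(M_i)$, where the rearrangement of the double series is legitimate since all terms are non-negative. Hence $m_1+m_2\in\mathrm{Reg}^{\sigma}_f(H)$, which yields the first assertion.

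For the second assertion I would exhibit a triple $m_1\oplus m_2=m_3$ valid in $\mathrm{Reg}_f(H)$ with $m_1,m_3\in\mathrm{Reg}^{\sigma}_f(H)$ but $m_2\notin\mathrm{Reg}^{\sigma}_f(H)$, which contradicts condition (ii) in the definition of a sub-generalized effect algebra. Take $m_1:=\dim$, i.e. $m_1(M)=\dim M$; this is the regular, completely additive (a fortiori $\sigma$-additive) measure induced through (3.1) by the identity operator, equivalently by the everywhere-defined positive bilinear form $(x,y)$, it has the L-S density property, and it is $\mathcal P_1(H)$-bounded with $D(m_1)=H$, so $m_1\in\mathrm{Reg}^{\sigma}_f(H)$. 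Let $t_s$ be the nonzero everywhere-defined positive singular symmetric bilinear form of Example \ref{ex:sing} and let $m_2:=m_{t_s}$ be the measure it induces via (3.1); by Theorem \ref{th:3.1}, $m_2$ is a regular finitely additive measure with the L-S density property and $D(m_2)=D(t_s)=H$, so $m_2\in\mathrm{Reg}_f(H)$, but, $t_s$ being singular, $m_2$ is not completely additive by \cite[Thm 3.7.6]{DvuG} (see also \cite{Lug2}), hence, $H$ being separable, not $\sigma$-additive; thus $m_2\notin\mathrm{Reg}^{\sigma}_f(H)$. Since $m_1$ is $\mathcal P_1(H)$-bounded, $m_1\oplus m_2$ is defined and equals $m_3:=m_1+m_2\in\mathrm{Reg}_f(H)$, induced by the form $(x,y)+t_s(x,y)$.

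It remains to verify $m_3\in\mathrm{Reg}^{\sigma}_f(H)$, i.e. that $m_3$ is $\sigma$-additive, and this is the only step requiring genuine (if slight) care: the point is that adjoining $\dim$ forces $m_3$ to be infinite on every infinite-dimensional subspace. Concretely, $m_3(M)=\dim M+m_2(M)$, and $m_3(M)<\infty$ precisely when $\dim M<\infty$, since for finite-dimensional $M$ the form $t_s\circ P_M$ is trace class, while for infinite-dimensional $M$ already $\dim M=\infty$. Given a countable orthogonal family $\{M_i\}$ with join $M$: if $\dim M=\infty$ then $m_3(M)=\infty=\sum_i\dim M_i\le\sum_im_3(M_i)$, so equality holds; if $\dim M<\infty$ then every $M_i$ is finite-dimensional and all but finitely many are trivial, so $m_3(M)=\sum_im_3(M_i)$ by the finite additivity of $m_3$. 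Hence $m_3$ is $\sigma$-additive and lies in $\mathrm{Reg}^{\sigma}_f(H)$, whereas $m_2$ does not, so the triple $m_1\oplus m_2=m_3$ witnesses that $\mathrm{Reg}^{\sigma}_f(H)$ is not a sub-generalized effect algebra of $(\mathrm{Reg}_f(H);\oplus,o)$, which completes the plan.
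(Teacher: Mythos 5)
Your proposal is correct, and its overall architecture coincides with the paper's: the first assertion is obtained by showing $\mathrm{Reg}^{\sigma}_f(H)$ contains $o$ and is closed under $\oplus$ (the paper dismisses this as ``straightforward''; your non-negativity/rearrangement argument is exactly the right justification), and the second assertion uses the same witnessing triple, namely $m_1=\dim$ induced by $t_I$ and $m_2$ induced by an everywhere-defined nonzero positive singular form. The one place where you genuinely diverge is the verification that $m_1\oplus m_2$ is $\sigma$-additive: the paper computes $\overline{(s+t_I)_r}=t_I$ via \cite[Cor 2.3]{Sim} and then invokes the Lugovaya criterion (3.2) (equivalently \cite[Thm 3.7.5]{DvuG}), whereas you observe that $m_1+m_2$ is finite exactly on finite-dimensional subspaces and check $\sigma$-additivity by hand in the two cases $\dim M<\infty$ (only finitely many nonzero summands, so finite additivity suffices) and $\dim M=\infty$ (both sides are $+\infty$ because $m_3\ge\dim$ and $\dim$ is completely additive). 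Your route is more elementary and self-contained, at the cost of being tied to the special feature that the bounded summand is $\dim$; the paper's route explains structurally, via (3.2), why adding $t_I$ restores $\sigma$-additivity and would adapt to other regular parts. Both arguments are sound, and all the side conditions you check (that $m_1$ is $\mathcal P_1(H)$-bounded so the sum is defined, that $D(m_2)=H$ so $m_2\in\mathrm{Reg}_f(H)$, and that separability lets you pass from failure of complete additivity to failure of $\sigma$-additivity) are exactly the ones needed.
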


\begin{proof}
Clearly, $o \in \mathrm{Reg}^{\sigma}_f(H)$. It is straightforward to show that if a sum $m_1\oplus m_2$ of two measures from $\mathrm{Reg}^{\sigma}_f(H)$ exists in $\mathrm{Reg}_f(H)$, it is a measure from $\mathrm{Reg}^{\sigma}_f(H)$ as well. Let us consider a regular measure $m_1 \in \mathrm{Reg}_f(H)$ given by $m_1(M) = \dim M$, $M\in \mathcal L(H)$, which is a $P_1(H)$-bounded $\sigma$-additive measure and it corresponds to the bilinear form $t_I$ corresponding to the identity operator $I$ on $H$, i.e. $t_I(x,x)=(Ix,x)=(x,x)$, $x \in H$. Let $m_2 \in \mathrm{Reg}_f(H)$ be a finitely additive measure given by (3.1) for some positive singular bilinear form $s$ with $D(s)=H$, hence by \cite[Thm 3.7.2]{DvuG}, $m_2$ is not $\sigma$-additive. Since $D(s) = H$, it holds $s \circ P_M \in \mathrm{Tr}(H)$ if $\dim M < \infty$. Moreover, $t_I \circ P_M = P_M \in \mathrm{Tr}(H)$ iff dim $M < \infty$ which gives $(s + t_I) \circ P_M \in \mathrm{Tr}(H)$ iff $\dim M < \infty$. By \cite[Cor 2.3]{Sim}, we have $\overline{(s + t_I)_r} = t_I$ and using (3.2) or \cite[Thm 3.7.5]{DvuG}, $m_1 \oplus_{\mid\mathrm{Reg}^{\sigma}_f} m_2:=m_1\oplus m_2 =m_1 +m_2$ is a $\sigma$-additive measure. Which means that $\mathrm{Reg}^{\sigma}_f(H)$ is not a sub-generalized effect algebra of $\mathrm{Reg}_f(H)$ because $m_1, m_1 \oplus m_2 \in \mathrm{Reg}^{\sigma}_f(H)$ but $m_2 \notin \mathrm{Reg}^{\sigma}_f(H)$.
\end{proof}

\begin{remark}\label{rem:3.5}
Let $H$ be an infinite-dimensional complex Hilbert space. Let  $\mathrm{Reg}^{c}_f(H)$ be the set of completely additive measures from $\mathrm{Reg}_f(H)$.
Then by \cite[Thm 3.6.3]{DvuG} $\mathrm{Reg}^{c}_f(H) = \mathrm{Reg}^{\sigma}_f(H) $, which yields $(\mathrm{Reg}^{c}_f(H);\oplus, o)$ is a generalized effect algebra.
\end{remark}

Using Theorem \ref{th:2.2} and Theorem \ref{th:2.4}, we can extend the previous results also for frame type functions. For any two frame type function $f_1, f_2$ on $H$, their {\it usual sum} $f(x) := f_1(x) + f_2(x)$ for all $x \in {\mathcal S}(S_1) \cap {\mathcal S}(S_2)$ is again a frame type function, whenever $S_1 \cap S_2$ is dense in $H$. %For any finite frame type function $f : {\mathcal S}(S) \to \mathbb{R}$ on $H$, we call $S = D(f)$ a {\it domain} of $f$.
%a maximal subspace $S \in H$ for which $x\in S$ implies $f(x) < \infty$ a {\it domain} $D(f) = S$ of $f$.

A frame function $o_f$ is defined by $o_f(x) = 0$ for all $x \in \mathcal S(H)$.

\begin{theorem}\label{th:3.6}
Let $H$ be an infinite-dimensional complex Hilbert space. Let $\mathrm{Ftf}(H)$ be the set of positive finite frame type functions $f\sim (f,S)$ on $H$ such that whenever $f$ is bounded, then $S=H$. Let us define a partial operation $\hat{\oplus}$ on $\mathrm{Ftf}(H)$:
For $f_1,f_2 \in \mathrm{Ftf}(H)$, $f_1 \hat{\oplus} f_2$ is defined if and only if  $f_1\sim (f,S_1)$ or $f_2\sim (f,S_2)$ is bounded or $S_1 = S_2$
and then $f_1\hat{\oplus} f_2 := f_1+ f_2$. Then $(\mathrm{Ftf}(H);\hat{\oplus}, o_f)$ is a generalized effect algebra.
\end{theorem}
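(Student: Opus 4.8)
The plan is to deduce Theorem~\ref{th:3.6} from Theorem~\ref{th:3.3} by producing a partial-algebra isomorphism $\Phi\colon \mathrm{Ftf}(H)\to \mathrm{Reg}_f(H)$ that sends $o_f$ to the zero measure $o$ and $\hat\oplus$ to $\oplus$. First I would observe that every positive finite frame type function is semibounded (its infimum over the unit sphere is $\ge 0$), so Theorem~\ref{th:2.2}, together with the one-to-one correspondence recorded just after it, assigns to each $f\sim(f,S)\in\mathrm{Ftf}(H)$ a \emph{unique} positive symmetric bilinear form $t_f$ with $D(t_f)=S$ and $t_f(x,x)=f(x)$ for $x\in\mathcal S(S)$, and conversely every positive symmetric bilinear form with dense domain arises in this way. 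Composing with Theorem~\ref{th:3.1}(1)--(2), which puts positive bilinear forms with dense domain in bijection with regular finitely additive measures having the L-S density property via $t\mapsto m_t$ (with $D(m_t)=D(t)$), I obtain a bijection $\Phi(f):=m_{t_f}$ from positive finite frame type functions on $H$ onto regular finitely additive measures with the L-S density property; in particular $\Phi(o_f)=o$.

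Next I would check that $\Phi$ is compatible with the membership conditions and with the partial operations. The key translation is that $f\sim(f,S)$ is bounded if and only if $m_{t_f}$ is $\mathcal P_1(H)$-bounded: since $m_{t_f}(\mathrm{sp}(x))=t_f(x,x)=f(x)$ for $x\in\mathcal S(S)$ and $m_{t_f}(\mathrm{sp}(x))=\infty$ for $x\notin S$, one gets $D(m_{t_f})=S$ and $\sup\{m_{t_f}(\mathrm{sp}(x))\colon x\in D(m_{t_f})\}=\sup\{f(x)\colon x\in\mathcal S(S)\}$. Hence the normalization ``bounded $\Rightarrow S=H$'' translates exactly to ``$\mathcal P_1(H)$-bounded $\Rightarrow D(m)=H$'', so $\Phi$ restricts to a bijection $\mathrm{Ftf}(H)\to\mathrm{Reg}_f(H)$, and the definition of $\hat\oplus$ ($f_1$ or $f_2$ bounded, or $S_1=S_2$) matches verbatim the definition of $\oplus$ under $\Phi$. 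Finally, when $f_1\hat\oplus f_2$ is defined, $D(t_{f_1})\cap D(t_{f_2})$ is dense in $H$, so by \cite[Thm 4.1]{DvJa} (and the remark preceding the theorem) $t_{f_1}+t_{f_2}$ is a densely defined positive symmetric bilinear form with associated frame type function $f_1+f_2$, and the one-line trace computation from the proof of Theorem~\ref{th:3.3} gives $m_{t_{f_1}+t_{f_2}}(M)=\tr\bigl((t_{f_1}+t_{f_2})\circ P_M\bigr)=\tr(t_{f_1}\circ P_M)+\tr(t_{f_2}\circ P_M)=m_{t_{f_1}}(M)+m_{t_{f_2}}(M)$ when finite, and $\infty$ otherwise; thus $\Phi(f_1\hat\oplus f_2)=\Phi(f_1)\oplus\Phi(f_2)$.

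Having shown $\Phi$ is an isomorphism of partial algebras carrying $o_f$ to $o$, I would conclude that axioms (GEi)--(GEv) for $(\mathrm{Ftf}(H);\hat\oplus,o_f)$ follow immediately from their counterparts for $(\mathrm{Reg}_f(H);\oplus,o)$ established in Theorem~\ref{th:3.3}. (Alternatively one can bypass $\Phi$ and verify (GEi)--(GEv) directly, repeating the domain case analysis on $S_1,S_2,S_3$ from the proof of Theorem~\ref{th:3.3} with measures replaced by their frame type functions.) I expect the main obstacle to be the bookkeeping around boundedness rather than any genuinely new idea: one must check that for positive data $f_1+f_2$ bounded forces each $f_i$ bounded (because $0\le f_i\le f_1+f_2$ on the common unit sphere, whence $S_i=H$ by membership), and that a bounded positive symmetric bilinear form has a \emph{unique} extension to all of $H$, so that the normalization behaves well under sums and under the cancellation law (GEiv); everything else is the commutativity, associativity, and cancellation of the usual sum of bilinear forms, exactly as in Theorem~\ref{th:3.3}.
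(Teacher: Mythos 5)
Your proposal is correct and follows essentially the same route as the paper: both use the bijection $f\mapsto t_f\mapsto m_{t_f}$ from Theorem~\ref{th:2.2} and Theorem~\ref{th:3.1} to identify $\mathrm{Ftf}(H)$ with $\mathrm{Reg}_f(H)$, match ``bounded'' with ``$\mathcal P_1(H)$-bounded'' and $S$ with $D(m)$, and then transfer the axioms (GEi)--(GEv) from Theorem~\ref{th:3.3}. Your write-up is merely more explicit than the paper's about why the correspondence is an isomorphism of partial algebras and about the bookkeeping for boundedness under sums.
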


\begin{proof}
By Theorem \ref{th:2.2}, there is a one-to-one correspondence between positive finite frame type functions and positive bilinear forms given by $f(x) = t(x,x)$ for $x \in D(t)$ and $S = D(t)$. A positive frame type function $f$ is bounded iff the corresponding bilinear form $t$ is bounded iff the corresponding finitely additive measure $m$ is bounded, and clearly $S = D(t) = D(m)$. Hence, the sets $\mathrm{Ftf}(H)$ and $\mathrm{Reg}_f(H)$ from Theorem $\ref{th:3.4}$ are in a one-to-one correspondence.

For any two frame type functions $f_1, f_2 \in \mathrm{Ftf}(H)$, $f_1 \hat{\oplus} f_2$ is defined if and only if for finitely additive measures $m_1, m_2$, given by bilinear forms $t_1(x,x) := f_1(x),t_2(x,x):=f_2(x)$, $m_1 \oplus m_2$ is defined and then $(f_1 \hat{\oplus} f_2)(x) = (m_1 \oplus m_2)({\rm sp}(x))$. That is, the proof follows the same arguments as the proof of Theorem \ref{th:3.3}.
\end{proof}

\begin{remark}
Let $f\sim (f,S)$ be a frame type function on $H$. If the function $\overline{f}$ given by $\overline{f}(x):= f(x)$ for all $x \in S$ and $f(x):= \infty$ otherwise is a frame function on $H$, then we say that $f$ {\it induces} a frame function $\overline{f}$.
\end{remark}

\begin{theorem}\label{th:3.8}
Let $H$ be an infinite-dimensional complex Hilbert space. Let $\mathrm{Ff}(H) \subseteq \mathrm{Ftf}(H)$ be the set of all frame type functions $f$ on $H$ which induce a frame function $\overline{f}$. Then $\mathrm{Ff}(H)$ is a subset of $\mathrm{Ftf}(H)$ but it is not a sub-generalized effect algebra of $(\mathrm{Ftf}(H);\hat{\oplus}, o_f)$, but $(\mathrm{Ftf}(H);\hat{\oplus}_{\mid\mathrm{Ftf}(H)},o_f)$ is a generalized effect algebra on its own.
\end{theorem}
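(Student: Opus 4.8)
The plan is to read the set $\mathrm{Ff}(H)$ off the measure side of the one-to-one correspondence of Theorem \ref{th:3.6} and then mirror Theorem \ref{th:3.4}. The inclusion $\mathrm{Ff}(H)\subseteq\mathrm{Ftf}(H)$ holds by the very definition of $\mathrm{Ff}(H)$, and $o_f\in\mathrm{Ff}(H)$ because the zero frame function induces itself. The key preliminary fact I would record is that, for $f\sim(f,S)\in\mathrm{Ftf}(H)$ with associated positive bilinear form $t$ (so $D(t)=S$) and associated regular finitely additive measure $m_t$ as in (3.1), one has $\overline f(x)=m_t(\mathrm{sp}(x))$ for every $x\in\mathcal S(H)$: for a unit vector $x\in S$ the form $t\circ P_{\mathrm{sp}(x)}$ is determined by the rank-one operator $y\mapsto t(x,x)(y,x)\,x$, of trace $t(x,x)=f(x)$, while for $x\notin S$ its domain is $x^{\perp}$, which is not dense, so $m_t(\mathrm{sp}(x))=\infty=\overline f(x)$. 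Hence $\overline f$ is a frame function if and only if $m_t$ is completely additive: one direction is the standard fact that $x\mapsto m(\mathrm{sp}(x))$ is a frame function for completely additive $m$, and for the converse I would check that the completely additive measure defined by the frame function $\overline f$ is itself regular and agrees with $m_t$ on every finite-dimensional subspace (by finite additivity of both), hence equals $m_t$. Combined with Remark \ref{rem:3.5}, this shows that under Theorem \ref{th:3.6} the set $\mathrm{Ff}(H)$ corresponds exactly to $\mathrm{Reg}^{c}_f(H)=\mathrm{Reg}^{\sigma}_f(H)$.

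Granting this identification, both remaining claims mirror Theorem \ref{th:3.4}, read through Theorem \ref{th:3.6}. For the generalized-effect-algebra structure of $(\mathrm{Ff}(H);\hat{\oplus}_{\mid\mathrm{Ff}(H)},o_f)$ I would also give a direct argument: if $f_1,f_2\in\mathrm{Ff}(H)$ and $f_1\hat{\oplus}f_2$ is defined in $\mathrm{Ftf}(H)$, then $\overline{f_1\hat{\oplus}f_2}=\overline{f_1}+\overline{f_2}$ (on the common domain both sides equal $f_1+f_2$, and off it at least one summand, hence also the left-hand side, equals $+\infty$), and a pointwise sum of two $[0,\infty]$-valued frame functions is again a frame function, since $\sum_i(\overline{f_1}+\overline{f_2})(x_i)=W^1_M+W^2_M$ is independent of the chosen ONB $\{x_i\}$ of $M\in\mathcal L(H)$; hence $f_1\hat{\oplus}f_2\in\mathrm{Ff}(H)$. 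Thus $\mathrm{Ff}(H)$ is a subset of $\mathrm{Ftf}(H)$ that contains $o_f$ and is closed under the partial operation $\hat{\oplus}$, so axioms (GEi)--(GEv) for $(\mathrm{Ff}(H);\hat{\oplus}_{\mid\mathrm{Ff}(H)},o_f)$ are inherited verbatim from the generalized effect algebra $(\mathrm{Ftf}(H);\hat{\oplus},o_f)$ of Theorem \ref{th:3.6}.

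To show that $\mathrm{Ff}(H)$ is not a sub-generalized effect algebra of $(\mathrm{Ftf}(H);\hat{\oplus},o_f)$, I would reproduce the obstruction from Theorem \ref{th:3.4}. Put $f_1(x):=(x,x)=1$ for $x\in\mathcal S(H)$, the frame function attached to the identity operator $I$ (equivalently, to the measure $m_1(M)=\dim M$); then $f_1\in\mathrm{Ff}(H)$. Let $s$ be a nonzero positive singular symmetric bilinear form with $D(s)=H$, as in Example \ref{ex:sing}, and put $f_2(x):=s(x,x)$, $x\in\mathcal S(H)$; then $f_2$ is an unbounded positive finite frame type function, so $f_2\in\mathrm{Ftf}(H)$, but $\overline{f_2}$ is not a frame function by \cite{Lug1} (cf. \cite[Thm 3.7.2]{DvuG}), so $f_2\notin\mathrm{Ff}(H)$. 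Since $f_1$ is bounded, $f_1\hat{\oplus}f_2=f_1+f_2$ is defined and, exactly as in the proof of Theorem \ref{th:3.3}, is carried by the bilinear form $t_I+s$ with $D(t_I+s)=H$; by \cite[Cor 2.3]{Sim} one has $\overline{(s+t_I)_r}=t_I$, so the criterion (3.2) (cf. \cite[Thm 3.7.5]{DvuG}) yields that the measure $m_1\oplus m_2$ is completely additive, whence $f_1\hat{\oplus}f_2\in\mathrm{Ff}(H)$. Therefore $f_1$ and $f_1\hat{\oplus}f_2$ lie in $\mathrm{Ff}(H)$ while $f_2\notin\mathrm{Ff}(H)$, which is exactly a failure of the defining condition of a sub-generalized effect algebra.

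The step I expect to be the main obstacle is the preliminary equivalence, for $f\sim(f,S)$ with form $t$, between $\overline f$ being a frame function and $m_t$ being completely additive, and in particular verifying that the completely additive measure reconstructed from the frame function $\overline f$ genuinely coincides with $m_t$ and not merely agrees with it on one-dimensional subspaces; the accompanying domain bookkeeping $D(t_1+t_2)=D(t_1)\cap D(t_2)$ in the mixed bounded/unbounded cases also needs care, but everything else is either transport along Theorem \ref{th:3.6} or a direct frame-function computation.
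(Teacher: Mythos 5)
Your proposal is correct and follows essentially the same route as the paper: identify $\mathrm{Ff}(H)$ with the completely additive measures $\mathrm{Reg}^{c}_f(H)=\mathrm{Reg}^{\sigma}_f(H)$ under the correspondence of Theorem \ref{th:3.6}, and then transport Theorem \ref{th:3.4} and Remark \ref{rem:3.5}. The paper states this in two sentences and leaves the details implicit; you supply exactly the missing verifications (that $\overline f(x)=m_t(\mathrm{sp}(x))$ including off $S$, closure of $\mathrm{Ff}(H)$ under $\hat\oplus$, and the singular-form counterexample reproduced from Theorem \ref{th:3.4}), all of which check out.
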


\begin{proof}
A restriction of any finite frame function $\overline{f}$ on $S = \{x \in H \mid f(x) < \infty \}$ is a finite frame type function $f\sim(f, S)$, and on the other hand, $f$ induces $\overline{f}$. Since there is a one-to-one correspondence between the set of positive frame functions $\mathrm{Ff}(H)$ and the set of completely additive measures $\mathrm{Reg}^{c}_f(H)$, the theorem follows from Theorem \ref{th:3.4} and Remark \ref{rem:3.5}.
\end{proof}

\section{Monotone convergence properties of $\sigma$-additive measures}

In the section, we will study some monotone Dedekind upwards (downwards) $\sigma$-complete properties of generalized effect algebras of regular measures.% which are different as some classes of bilinear forms studied in \cite{DvJa}.

In \cite[Thm 3.10.1]{DvuG}, there was proved an analogue of the Nikod\'ym convergence theorem which says following:

\begin{theorem}\label{th:3.10.1}
Let $\{m_n\}$ be a sequence of finite signed $\sigma$-additive measures on $\mathcal L(H)$ of a Hilbert space $H$. If there is a finite limit $m(M) = \lim_{n \to \infty} m_n (M)$ for any $M \in \mathcal L(H)$, then $m$ is a finite signed measure on $\mathcal L(H)$, and $\{m_n\}$ is uniformly $\sigma$-additive with respect to $n$.
\end{theorem}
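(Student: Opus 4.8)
The plan is to push the problem forward onto the power-set $\sigma$-algebra $\mathcal P(\mathbb N)$, apply the classical Nikod\'ym convergence theorem there, and pull the conclusions back to $\mathcal L(H)$. The easy part comes first: $m$ is finitely additive, since $m({\rm sp}(\mathbf 0))=\lim_n m_n({\rm sp}(\mathbf 0))=0$ and, for $M\perp N$, $m(M\vee N)=\lim_n\big(m_n(M)+m_n(N)\big)=m(M)+m(N)$, all limits being finite by hypothesis; thus $m$ is a finite finitely additive signed measure, and it remains to prove uniform $\sigma$-additivity of $\{m_n\}$ and $\sigma$-additivity of $m$.

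For the main step, fix an arbitrary sequence $\{M_i\}_{i\in\mathbb N}$ of pairwise orthogonal closed subspaces and set, for each $n$ and each $J\subseteq\mathbb N$, $\mu_n(J):=m_n\big(\bigvee_{i\in J}M_i\big)$. Because $m_n$ is $\sigma$-additive and finite-valued, the series $\sum_i m_n(M_i)=m_n(\bigvee_i M_i)$ is independent of the enumeration of $\mathbb N$, hence absolutely convergent by Riemann's Rearrangement Theorem; consequently $\mu_n(J)=\sum_{i\in J}m_n(M_i)$, so $\mu_n$ is an everywhere-finite $\sigma$-additive signed measure on $\mathcal P(\mathbb N)$, and by hypothesis $\mu_n(J)\to\mu(J):=m(\bigvee_{i\in J}M_i)$ for every $J$, with $\mu(J)$ finite. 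The classical Nikod\'ym convergence theorem then gives that $\{\mu_n\}$ is uniformly $\sigma$-additive; applied to the tails $J_k=\{i:\,i\ge k\}$ this reads $\lim_{k\to\infty}\sup_n\big|\sum_{i\ge k}m_n(M_i)\big|=0$. Since $\{M_i\}$ was arbitrary, $\{m_n\}$ is uniformly $\sigma$-additive on $\mathcal L(H)$.

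To obtain $\sigma$-additivity of $m$, take pairwise orthogonal $\{M_i\}$ with $M=\bigvee_i M_i$ and, given $\varepsilon>0$, use the uniform $\sigma$-additivity just proved to pick $k$ with $\sup_n\big|\sum_{i>k}m_n(M_i)\big|<\varepsilon$; letting $n\to\infty$ also $\big|\sum_{i>k}m(M_i)\big|\le\varepsilon$, and then from
\[
\Big|\,m_n(M)-\sum_i m(M_i)\,\Big|\;\le\;\Big|\sum_{i\le k}\big(m_n(M_i)-m(M_i)\big)\Big|+2\varepsilon ,
\]
by sending $n\to\infty$ one gets $\big|m(M)-\sum_i m(M_i)\big|\le 2\varepsilon$, hence $m(M)=\sum_i m(M_i)$. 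Together with the first paragraph this shows $m$ is a finite $\sigma$-additive signed measure.

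The hard part is the invocation of the classical Nikod\'ym convergence theorem for countably additive signed measures on a $\sigma$-algebra (on $\mathcal P(\mathbb N)$ essentially a Phillips/Schur-type statement, proved by a sliding-hump/Baire category argument); the rest is bookkeeping, the only delicate point being to verify that the push-forwards $\mu_n$ are genuine $\sigma$-additive signed measures on $\mathcal P(\mathbb N)$, which is exactly where unconditional $=$ absolute convergence of $\sum_i m_n(M_i)$ is used. Note that neither separability of $H$ nor any restriction on $\dim H$ enters the argument, since only countable orthogonal families are involved.
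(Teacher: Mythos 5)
The paper offers no proof of this theorem: it is quoted verbatim from \cite[Thm 3.10.1]{DvuG}, so there is nothing in the text to compare your argument against line by line. On its own merits your reduction is correct. The key observation --- that for any fixed orthogonal sequence $\{M_i\}$ the map $J\mapsto m_n(\bigvee_{i\in J}M_i)$ is a finite, countably additive signed measure on $\mathcal P(\mathbb N)$ (because disjoint $J_k$ give mutually orthogonal joins $\bigvee_{i\in J_k}M_i$, and finiteness plus unconditional convergence forces absolute convergence) --- makes the passage to the classical Nikod\'ym convergence theorem on $\mathcal P(\mathbb N)$ lossless, since both $\sigma$-additivity and uniform $\sigma$-additivity on $\mathcal L(H)$ are by definition statements about countable orthogonal families. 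This is essentially the standard route to Nikod\'ym-type theorems on $\mathcal L(H)$, and it correctly isolates all the analytic difficulty in the Phillips/sliding-hump argument behind the classical theorem. Two small remarks: your third paragraph is redundant, because the classical Nikod\'ym theorem already asserts that the pointwise limit $\mu$ is itself countably additive on $\mathcal P(\mathbb N)$, which applied to the partition of $\mathbb N$ into singletons gives $m(\bigvee_iM_i)=\sum_im(M_i)$ at once; and in that paragraph the expression $\sum_{i>k}m(M_i)$ is used before the convergence of the series $\sum_im(M_i)$ has been justified (it does converge, by the uniform Cauchy property of the partial sums in $n$ together with pointwise convergence, or again directly from the countable additivity of $\mu$), so a sentence to that effect would tighten the argument.
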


We say that a generalized effect algebra $E$ is (i) {\it monotone Dedekind upwards $\sigma$-complete} if, for any sequence $x_1 \le x_2\le \cdots,$ which is  dominated  by some element $x_0$, i.e. each $x_n \le x_0,$ the element $x = \bigvee_n x_n$  is defined in $E$ (we write $\{x_n\}\nearrow x$), (ii) {\it monotone Dedekind downwards $\sigma$-complete} if, for any sequence $x_1 \ge x_2\ge \cdots,$  the element $x = \bigwedge_n x_n$  is defined in $E$ (we write $\{x_n\}\searrow x$). If $E$ is an effect algebra,  both later  notions are equivalent. In addition, we say that a generalized effect algebra $E$ is {\it upwards directed} if, given $a_1,a_2 \in E,$ there is $a \in E$ such that $a_1,a_2 \le a.$

\begin{theorem}\label{mdd}
Let ${H}$ be an infinite-dimensional complex Hilbert space. Then the generalized effect algebra $(\mathrm{Reg}_f(H);\oplus, o)$ is
monotone downwards $\sigma$-complete, but it is not Dedekind monotone upwards $\sigma$-complete.
\end{theorem}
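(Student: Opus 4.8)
The plan is to work through the bijection of Theorem~\ref{th:3.1} between $\mathrm{Reg}_f(H)$ and the positive densely defined bilinear forms, under which (as computed in the proof of Theorem~\ref{th:3.3}) the partial sum $\oplus$ corresponds to the usual sum $+$ of forms; write $t_m$ for the form attached to $m$, so $D(t_m)=D(m)$ and $m$ is $\mathcal P_1(H)$-bounded iff $t_m$ is bounded iff $D(t_m)=H$. I would first record the translation of the induced order: if $m'\le m$, then $m=m'\oplus m''$ gives $t_m=t_{m'}+t_{m''}$, hence $t_{m'}\preceq t_m$ and, comparing domains in the definedness condition for $\oplus$, either $D(m')=H$ or $D(m')=D(m)$. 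Consequently, along a decreasing chain $m_1\ge m_2\ge\cdots$ the domains $D(m_n)$ are nondecreasing and take at most the two values $D(m_1)$ and $H$; moreover if $D(m_N)\ne H$ but $D(m_{N+1})=H$, the defect $m_N\ominus m_{N+1}$ has domain $D(m_N)\ne H$, so is not $\mathcal P_1(H)$-bounded, and definedness of $m_{N+1}\oplus(m_N\ominus m_{N+1})$ forces $m_{N+1}$, and hence every $t_{m_n}$ with $n>N$, to be bounded.

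For downward completeness, given decreasing $\{m_n\}$ set $t_n=t_{m_n}$, so $t_1\succeq t_2\succeq\cdots$ with $D(t_1)\subseteq D(t_2)\subseteq\cdots$. On the dense manifold $D:=\bigcup_n D(t_n)$ define $t(x,x):=\lim_n t_n(x,x)=\inf_n t_n(x,x)$ (the limit exists: $t_n(x,x)$ is eventually defined, nonincreasing, nonnegative) and extend $t$ to a symmetric bilinear form on $D$ by polarization; then $t$ is a positive densely defined form, indeed the $\preceq$-infimum of $\{t_n\}$. Put $\tilde t:=t$ when $t$ is unbounded, and let $\tilde t$ be the unique bounded extension of $t$ to $H$ (\cite[Lem~4.1]{DvJa}) otherwise; then $m:=m_{\tilde t}\in\mathrm{Reg}_f(H)$ by Theorem~\ref{th:3.1}. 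I would then check $m=\bigwedge_n m_n$. For $m\le m_n$: the form $t_n-\tilde t$ is positive with domain $D(t_n)$, and, after replacing it by its bounded extension to $H$ if it happens to be bounded, it is the form of a measure $m_n'\in\mathrm{Reg}_f(H)$; the sum $m\oplus m_n'$ is then defined (because $m$ is $\mathcal P_1(H)$-bounded whenever $t$ is bounded -- in particular whenever the domains jump -- while otherwise $D(m)=D(m_n')=D(t_n)$, or $m_n'$ itself is $\mathcal P_1(H)$-bounded) and equals $m_n$. For maximality: any lower bound $m'$ satisfies $t_{m'}\preceq t_n$ for all $n$, hence $D(t_{m'})\supseteq D$ and $t_{m'}\le t$ on $D$, and the same bookkeeping -- using that a $\mathcal P_1(H)$-bounded $m'$ has $t_{m'}$ bounded so that $t-t_{m'}$ stays unbounded when $t$ is, and that a non-$\mathcal P_1(H)$-bounded lower bound is forced to have $D(m')=D$ -- yields $m''\in\mathrm{Reg}_f(H)$ with $m'\oplus m''=m$. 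Since no dominating element is required, this gives monotone downwards $\sigma$-completeness.

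For the failure of upward completeness I would produce an explicit dominated increasing sequence without a join. Fix a countable orthonormal system $\{e_k\}_{k\ge1}$ in $H$ and let $t_n$ be the bounded finite-rank positive form $t_n(x,x)=\sum_{k=1}^n k\,|(x,e_k)|^2$ on $H$; then $m_n:=m_{t_n}\in\mathrm{Reg}_f(H)$ is $\mathcal P_1(H)$-bounded and, since the consecutive defects $t_{n+1}-t_n$ are bounded rank-one forms, $m_1\le m_2\le\cdots$. Let $t$ and $s$ be the unbounded positive forms $t(x,x)=\sum_{k\ge1}k\,|(x,e_k)|^2$ and $s(x,x)=\sum_{k\ge1}k^2\,|(x,e_k)|^2$ on their natural dense domains $D(t)=\{x:\sum_kk|(x,e_k)|^2<\infty\}$ and $D(s)=\{x:\sum_kk^2|(x,e_k)|^2<\infty\}$, so that $D(s)\subsetneq D(t)\subsetneq H$. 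One checks $m_t$ and $m_{t+s}$ lie in $\mathrm{Reg}_f(H)$ and are upper bounds of $\{m_n\}$ (each $m_n$ is $\mathcal P_1(H)$-bounded, so the sums $m_n\oplus m_{t-t_n}=m_t$ and $m_n\oplus m_{(t+s)-t_n}=m_{t+s}$ are defined), so $\{m_n\}$ is dominated. On the other hand, any upper bound $m'$ has $t_{m'}\succeq t_n$ for every $n$, so $t_{m'}(x,x)\ge\sup_n t_n(x,x)=\sum_kk|(x,e_k)|^2$; as $t_{m'}$ is finite-valued this forces $D(t_{m'})\subseteq D(t)\ne H$, so no upper bound is $\mathcal P_1(H)$-bounded, and then the order translation of the first paragraph shows that $m'\le m''$ between two upper bounds forces $D(m')=D(m'')$. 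Since $m_t$ and $m_{t+s}$ are upper bounds with $D(m_t)=D(t)\ne D(s)=D(m_{t+s})$, a least upper bound cannot exist, so $\bigvee_n m_n$ is undefined.

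The step I expect to be the real obstacle is the verification that $m=\bigwedge_n m_n$ in the downward part -- specifically, the domain bookkeeping forced by the clause ``if $m$ is $\mathcal P_1(H)$-bounded then $D(m)=H$'' in the definition of $\mathrm{Reg}_f(H)$: whenever a difference of forms turns out bounded it must first be extended to all of $H$ before it can be added back, and one has to confirm that these extensions are compatible with the intended sums and that each use of $\oplus$ is genuinely defined. The dichotomy ``the domains of a decreasing chain jump to $H$ at most once, and a jump forces the tail of the chain to be bounded'' is what keeps this under control.
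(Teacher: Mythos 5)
Your argument is correct, but it is organized quite differently from the paper's. The paper disposes of this theorem in two lines: it observes that $(\mathrm{Reg}_f(H);\oplus,o)$ is isomorphic (via Theorem~\ref{th:3.1}, as worked out in the proof of Theorem~\ref{th:3.3}) to the generalized effect algebra of positive bilinear forms studied in \cite{DvJa}, and then cites \cite[Thm 5.2]{DvJa} for both the downward completeness and the failure of upward completeness. You pass through the same isomorphism but then reconstruct the content of that cited theorem from scratch: the infimum of a decreasing chain is built as the pointwise limit of the quadratic forms (recovered as a bilinear form by polarization), and the failure of upward completeness is witnessed by the diagonal forms $\sum_{k\le n}k\,|(x,e_k)|^2$ together with the two incomparable upper bounds $m_t$ and $m_{t+s}$ whose domains differ. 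I checked the two points where your sketch compresses the most and they do close up: (i) in a decreasing chain the domains are nondecreasing, take at most the two values $D(m_1)$ and $H$, and a jump to $H$ forces the tail of the chain (hence the limit form) to be bounded, which is exactly what makes each $m\oplus m_n'=m_n$ and $m'\oplus m''=m$ a \emph{defined} sum after extending any bounded defect to all of $H$; (ii) a non-$\mathcal P_1(H)$-bounded lower bound is forced to have domain equal to every $D(t_n)$, which rules out the troublesome case of a form bounded on a proper dense domain. Your approach buys a self-contained proof at the cost of redoing \cite{DvJa}; the paper's buys brevity at the cost of opacity about where the domain bookkeeping actually happens -- your closing remark correctly identifies that bookkeeping as the substance of the result.
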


\begin{proof}
This holds since $(\mathrm{Reg}_f(H);\oplus, o)$ is isomorphic to the generalized effect algebra of all positive bilinear forms on $H$, which by \cite[Thm 5.2]{DvJa} has the mentioned properties.
\end{proof}

\begin{theorem}
Let $H$ be an infinite-dimensional complex Hilbert space. The generalized effect algebra $(\mathrm{Reg}^{\sigma}_f(H);\oplus, o)$ is neither
monotone downwards $\sigma$-complete, nor Dedekind monotone upwards $\sigma$-complete.
\end{theorem}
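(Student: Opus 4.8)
The plan is to carry the problem over to bilinear forms, exactly as in the proof of Theorem~\ref{mdd}. By Theorem~\ref{th:3.1} the assignment $m\mapsto t$ identifies $(\mathrm{Reg}_f(H);\oplus,o)$ with the generalized effect algebra of positive bilinear forms with dense domain (boundedness forcing full domain), and under this identification $\mathrm{Reg}^{\sigma}_f(H)$ corresponds to those $t$ that satisfy the Lugovaya criterion (3.2); cf.\ \cite{Lug2}, \cite[Thm 3.7.5]{DvuG}. For each half of the statement I would exhibit a monotone sequence in $\mathrm{Reg}^{\sigma}_f(H)$ whose infimum, resp.\ supremum, \emph{in the larger} $\mathrm{Reg}_f(H)$ exists (by Theorem~\ref{mdd} and \cite[Thm 5.2]{DvJa}) but fails (3.2), and then argue that no $\sigma$-additive measure can play the role of the missing bound. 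Note that, by Theorem~\ref{th:3.4}, $\mathrm{Reg}^{\sigma}_f(H)$ is not a sub-generalized effect algebra of $\mathrm{Reg}_f(H)$, so the order induced on it is strictly finer; this is what makes the missing bounds possible and is also the source of the technical work.

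For the downwards part I would start from the mechanism of Theorem~\ref{th:3.4}: fix a positive \emph{compact but non--trace-class} operator $A$ (say $A=\mathrm{diag}(1/k)$ in some ONB $\{e_k\}$) and a nonzero positive singular form $s$ with $D(s)=H$ chosen so that $s$ is ``trace-infinite'' on a coordinate subspace $M$ on which $A$ is still trace-class (i.e.\ $\mathrm{tr}(P_MAP_M)<\infty$ but $m_s(M)=\infty$); set $t_n:=A+s+\tfrac1n\,t_I$ and $m_n:=m_{t_n}$. As in the proof of Theorem~\ref{th:3.4}, using $\overline{(t_n)_r}=A+\tfrac1n I\notin\mathrm{Tr}(H)$ together with \cite[Cor 2.3]{Sim}, one checks each $m_n\in\mathrm{Reg}^{\sigma}_f(H)$; the $m_n$ decrease, and their infimum in $\mathrm{Reg}_f(H)$ is $m_{A+s}$, which fails (3.2) at $M$ (there $\overline{(m_{A+s}\circ P_M)_r}=P_MAP_M\in\mathrm{Tr}(H)$ while $m_{A+s}(M)=\infty$), hence lies outside $\mathrm{Reg}^{\sigma}_f(H)$. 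It then remains to see that the set of $\sigma$-additive lower bounds of $\{m_n\}$ has no greatest element: any such $\hat m$ has $t_{\hat m}\preceq A+s$ with closable part $\overline{(t_{\hat m})_r}$ below the form of $A$, so $t_{\hat m}$ has the shape $B+\sigma'$ with $0\le B\le A$ and $\sigma'\preceq s$ singular, the pair being further constrained by (3.2) on the thin subspaces on which $A$ is trace-class; since $A$ itself is not trace-class there is no largest trace-class $B\le A$, and one verifies that adjoining singular summands does not repair this, so no single $\sigma$-additive measure dominates all these bounds.

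The upwards part would be treated dually: an increasing sequence $\{m_n\}$ dominated by an \emph{unbounded} $m_0\in\mathrm{Reg}^{\sigma}_f(H)$, built from forms $t_n\nearrow t_\infty$ each satisfying (3.2) — for instance suitable perturbations of the Gleason measures $m_{P_n}$ of the finite-rank projections $P_n$ pushed towards a measure whose singular component only becomes ``active'' in the limit — but with $t_\infty$, the would-be supremum in $\mathrm{Reg}_f(H)$, failing (3.2) (or not even existing in $\mathrm{Reg}_f(H)$ because of domain incompatibility among upper bounds, the same obstruction then ruling out a supremum in $\mathrm{Reg}^{\sigma}_f(H)$); one again shows no $\sigma$-additive upper bound is minimal. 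In both halves the main obstacle is precisely this last non-existence step. It requires an exact description of the order on $\mathrm{Reg}^{\sigma}_f(H)$ — recalling that $\oplus$ demands matching domains away from the bounded case — together with a precise account of which positive bilinear forms satisfy (3.2) and lie below $t_0$ (resp.\ above $t_\infty$), which I would obtain from Simon's decomposition (Theorem~\ref{bs}), the behaviour of $t\mapsto (t\circ P_M)_r$, and the correspondence between finite regular measures and trace operators.
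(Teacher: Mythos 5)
Your overall strategy --- pass to bilinear forms, produce a monotone sequence of $\sigma$-additive measures whose bound in $\mathrm{Reg}_f(H)$ fails to be $\sigma$-additive, and then argue that no $\sigma$-additive measure can replace it --- is the right one and is broadly what the paper does. But the step you yourself flag as ``the main obstacle'' is exactly the step you have not supplied, and the sketch you give for it does not work. For your downwards example $t_n=A+s+\tfrac1n t_I$, the infimum in $\mathrm{Reg}_f(H)$ being the non-$\sigma$-additive $m_{A+s}$ does not by itself rule out an infimum in $\mathrm{Reg}^{\sigma}_f(H)$: the order on $\mathrm{Reg}^{\sigma}_f(H)$ is strictly finer, so the $\sigma$-infimum, if it exists, need not coincide with the infimum taken in $\mathrm{Reg}_f(H)$. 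Your argument that ``there is no largest trace-class $B\le A$'' is beside the point, since $\sigma$-additive lower bounds need not be trace-class --- $m_A$ itself is a perfectly good $\sigma$-additive lower bound of your sequence (the differences $s+\tfrac1n t_I$ satisfy (3.2) vacuously on infinite-dimensional subspaces), and you never address why $m_A$, or some $m_{A+s''}$ with a suitable singular $s''$, is not the greatest such. ``One verifies that adjoining singular summands does not repair this'' is precisely the assertion that needs a proof.

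The paper closes this gap with a different and much more economical device. It takes the classical Simon example on $H=L^2(0,1)$: $s_n(u,u)=(1+\tfrac1n)\int_0^1|u'|^2+|u(0)|^2+|u(1)|^2$, $s=\lim_n s_n$, and $\hat s(u,u)=\int_0^1|u'|^2$. All of $s_n$, $s$, $\hat s$ are closed, so all the associated measures are $\sigma$-additive; both $m_s$ and $m_{\hat s}$ are lower bounds of $\{m_{s_n}\}$ \emph{in the $\sigma$-order} (the differences $s_n-s$ and $s_n-\hat s$ are closed), the infimum in $\mathrm{Reg}_f(H)$ is $m_s$, yet $m_{\hat s}\not\le_{\sigma}m_s$ because $s-\hat s$ is the singular form $|u(0)|^2+|u(1)|^2$. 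Any $\sigma$-infimum would have to lie between $m_s$ and the $\mathrm{Reg}_f(H)$-infimum $m_s$, hence equal $m_s$, contradicting $m_{\hat s}\not\le_{\sigma}m_s$. In other words, the obstruction is exhibited by two explicit, mutually ``singularly separated'' $\sigma$-additive lower bounds, not by a classification of all lower bounds below a non-$\sigma$-additive form. Finally, the paper does not construct a dual example for the upwards half at all: it deduces it from the downwards failure via \cite[Lemma 5.1]{DvJa}, whereas your dual construction is only gestured at. As it stands, your proposal is a programme rather than a proof; to complete it along your own lines you would still have to prove the non-existence of a greatest $\sigma$-additive lower bound for your specific sequence, which is the hard part.
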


\begin{proof}
Let us consider the complex Hilbert space $H = L^2(0,1)$ and the following sequence of bilinear forms $\{s_n\}_{n\in \mathbb{N}}$ given by
$$
s_n(u,u):= (1+ \frac{1}{n}) \int_0^1 \left|u'(x)\right|^2 {\rm d}x + \left|u(0)\right|^2 + \left|u(1)\right|^2,\eqno(4.1)
$$
where $u'$ is derivative of $u \in L^2[0,1]$. Let
$$s(u,u):= \int_0^1 \left|u'(x)\right|^2 {\rm d}x + \left|u(0)\right|^2 + \left|u(1)\right|^2 ,$$
$$\hat{s}(u,u) := \int_0^1 \left|u'(x)\right|^2 {\rm d}x, $$
$$ s_0(u,u):= \left|u(0)\right|^2 + \left|u(1)\right|^2. $$

Then $s, \hat{s}$ and $s_n$ are closed bilinear forms for all $n \in \mathbb{N}$. According to \cite[Prop 3.7.4]{Dvu1}, measures $m_s, m_{\hat{s}}$ and $m_{s_n}$ induced by these bilinear forms via (3.1) are $\sigma$-additive (due to their regularity also completely additive). Also $s-s_n$, $\hat{s}-s_n$ are closed bilinear forms, which means that $m_s, m_{\hat{s}} \leq_{\sigma} m_{s_n}$ for all $n\in \mathbb{N}$, where $\leq_\sigma$ is the partial order in the generalized effect algebra $\mathrm{Reg}^\sigma_f(H)$ induced by the partial addition in it via (ED). It holds $m_{\hat{s}} \leq_\sigma m_s$ in the generalized effect algebra $\mathrm{Reg}_f(H)$, but since $s-\hat{s}$ is a singular bilinear form, the corresponding measure $m_{s-\hat{s}}$ is not $\sigma$-additive hence $m_{\hat{s}} \nleq_{\sigma} m_s$. Since $m_s = \bigvee_{n\in \mathbb{N}} m_{s_n}$ in the generalized effect algebra $\mathrm{Reg}_f(H)$, the sequence $\{m_{s_n}\}_{n\in \mathbb{N}}$ has no infimum in $\mathrm{Reg}^{\sigma}_f(H)$. By \cite[Lemma 5.1]{DvJa}, $\mathrm{Reg}^{\sigma}_f(H)$ is not a Dedekind upwards $\sigma$-complete generalized effect algebra.
\end{proof}


\begin{thebibliography}{DvVe2}

\bibitem[Fic]{Fic}
G.M. Fichtengol'c, {\it Kurs differencial'nogo i integral'nogo is\v{c}islenia"}, Fizmatgiz, Moskva, 1962 (in Russian).

\bibitem[DoSh]{DoSh}
S.V. Dorofeev, A.N. Sherstnev, {\em Frame-type functions and their
applications}, Izv. vuzov matem. No. 4 {\bf  } (1990), 23--29 (in Russian). English translation: Sov. Math. {\bf 34} No. 4, (1990),  25--31.

\bibitem[Dvu1]{Dvu1}
A. Dvure\v censkij, {\it Signed states on a logic}, Math. Slovaca
{\bf 28} (1978), 33--40.

\bibitem[Dvu2]{Dvu2}
A. Dvure\v censkij,  {\it  Converse of the Eilers-Horst theorem,}
Inter. J. Theor. Phys. {\bf 26} (1987), 609--612.


\bibitem[Dvu3]{DvuG}
A. Dvure{\v{c}}enskij,  {\it ``Gleason's Theorem and its Applications"}, Mathematics and its Applications,
Vol. 60. Kluwer Acad. Publ., Dordrecht/Ister Science, Bratislava, 1993.

\bibitem[DvJa]{DvJa}
A. Dvure\v censkij, J. Janda,     {\it On bilinear forms from the point of view of generalized effect algebras,} Found. Phys. {\bf 43} (2013), 1136--1152. DOI: 10.1007/s10701-013-9736-2

\bibitem[DvPu]{DvPu} A. Dvure\v censkij, S. Pulmannov\'a,   {\it ``New
Trends in Quantum Structures"}, Kluwer Academic Publ.,
Dordrecht, Ister Science, Bratislava, 2000, 541 + xvi pp.


\bibitem[FoBe]{FoBe}  D.J. Foulis, M.K. Bennett,
{\it  Effect algebras and unsharp quantum logics}, Found. Phys. {\bf
24} (1994), 1331--1352.

\bibitem[Gle]{Gle}
A.M. Gleason, {\em Measures on the closed subspaces of a Hilbert space}, J. Math. Mech. {\bf 6} (1957), 885--893.

\bibitem[Hal]{Hal}
P.R. Halmos,  {\it ``Introduction to Hilbert Space and the Theory of Spectral Multiplicity"}, Chelsea Publ. Co., New York, 2nd ed. 1957.


\bibitem[Lug1]{Lug1}
G.D. Lugovaya, {\it Bilinear forms defining measures on projectors}, Izv. Vyssh. Uchebn. Zaved., Mat. No. 2 {\bf 249} (1983),  88--88 (in Russian). English translation: Sov. Math. {\bf 27} (1983), 102--102.

\bibitem[Lug2]{Lug2}
G.D. Lugovaya, {\em On a construction of unbounded measures on projectors of Hilbert space}, Issled. po priklad. matem., Izdat. Kazan Univ.  {\bf  } (1984), 202--205 (in Russian). English translation:
J. Sov. Math. {\bf 44} No. 5 (1089), 711--713.

\bibitem[LuSh]{LuSh}
G.D. Lugovaya, A.N. Sherstnev, {\em On the Gleason theorem
for unbounded measures}, Izv. vuzov matem. No. 2 {\bf  } (1980), 30--32 (in Russian).
English translation:
Sov. Math. {\bf 24} No. 12 (1980), 35--38.

\bibitem[RZP]{RZP} 
Z. Rie\v{c}anov\'a, M. Zajac, S. Pulmannov\'{a}, {\it Effect algebras of positive linear operators densely
defined on Hilbert spaces}, Rep. Math. Phys. {\bf 68} (2011), 261--270.

\bibitem[She]{She}
A.N. Sherstnev, {\em On a notion of a charge in noncommutative scheme of measure theory}, in:  ``Veroj. metod i kibern.", Kazan, No.   10-11
(1974), pp. 68--72  (in Russian).

\bibitem[Sim]{Sim}
B. Simon, {\it A canonical decomposition for quadratic forms with applications for monotone convergence theorems},
J. Funct. Analysis {\bf 28} (1978), 377--385.

\end{thebibliography}
\end{document}